\documentclass{article}
\usepackage{kr}

\usepackage{times}
\usepackage{soul}
\usepackage{url}
\usepackage[hidelinks]{hyperref}
\usepackage[utf8]{inputenc}
\usepackage[small]{caption}
\usepackage{graphicx}
\usepackage{amsmath}
\usepackage{amsthm}
\usepackage{booktabs}
\usepackage{algorithm2e}
\usepackage{tikz}
\usetikzlibrary{fadings}
\usetikzlibrary{shapes,decorations,positioning}
\usetikzlibrary{fadings,shapes,decorations,positioning,calc}
\usetikzlibrary{decorations.markings,decorations.pathreplacing}
\usetikzlibrary{shapes.geometric,automata,positioning}
\usetikzlibrary{shadows}\usetikzlibrary{calc}
\usetikzlibrary{decorations,decorations.pathmorphing,decorations.pathreplacing}
\usetikzlibrary{calc}
\usetikzlibrary{tikzmark}

\newtheorem{theorem}{Theorem}
\newtheorem{example}[theorem]{Example}
\newtheorem{definition}[theorem]{Definition}
\newtheorem{proposition}[theorem]{Proposition}
\newtheorem{lemma}[theorem]{Lemma}
\newtheorem{corollary}[theorem]{Corollary}

\title{Representation Theorems for Cumulative Propositional Dependence Logics}

\author{%
         Juha Kontinen$^{1}$\footnote{The authors are ordered alphabetically.} \and Arne Meier$^{2*}$\and Kai Sauerwald$^{3*}$
         \affiliations
         $^1$Department of Mathematics and Statistics, University of Helsinki, Helsinki, Finland\\
         $^2$Theoretical Computer Science, Leibniz University Hannover, Hannover, Germany\\
         $^3$Faculty of Mathematics and Computer Science, FernUniversität in Hagen, Hagen, Germany
     }

\usepackage{dlcobook}
\usepackage{xspace}
\usepackage{mathrsfs} 
\usepackage{mathtools}
\usepackage{amssymb,amsmath}
\usepackage{comment}
\usepackage{stmaryrd}
\usepackage{csquotes}
\usepackage[UKenglish]{babel}
\usepackage{enumerate}
\usepackage{microtype}
\usepackage{multicol}

\usepackage{pifont}%
\newcommand{\minOf}[2]{\ensuremath{\min(#1,#2)}}
\renewcommand{\tuple}[1]{\ensuremath{\langle{#1}\rangle}}
\newcommand{\statesOf}[1]{\ensuremath{\pmS(#1)}}
\newcommand{\modelsOf}[1]{\ensuremath{\llbracket #1\rrbracket}}

\newcommand{\pmC}{\ensuremath{\mathbb{C}}}
\newcommand{\pmM}{\ensuremath{\mathbb{M}}}

\newcommand{\pmS}{\ensuremath{\mathcal{S}}}
\newcommand{\pmL}{\ensuremath{\ell}}
\newcommand{\pmP}{\ensuremath{\prec}}
\newcommand{\pmR}{\ensuremath{\mathrel{R}}}

\DeclareMathOperator{\nmableitC}{\nmableit_{\mkern-2mu\pmC}}

\ifx\nmableit\undefined
\makeatletter
\ifx\centernot\undefined
\newcommand*{\centernot}{%
	\mathpalette\@centernot
}
\fi
\def\@centernot#1#2{%
	\mathrel{%
		\rlap{%
			\settowidth\dimen@{$\m@th#1{#2}$}%
			\kern.5\dimen@
			\settowidth\dimen@{$\m@th#1=$}%
			\kern-.5\dimen@
			$\m@th#1\not$%
		}%
		{#2}%
	}%
}
\makeatother
\DeclareRobustCommand\nmableitSymb{\mathrel{|}\joinrel\sim} %
\newcommand{\nmableit}{\nmableitSymb} %
\fi

 \newcommand{\allModels}[1]{\mathbb{A}_{#1}}
\newcommand{\allTeams}[1]{\mathbb{T}_{#1}}

\newif\ifhideproofs

\ifhideproofs
\usepackage{environ}
\NewEnviron{hide}{}

\fi

\newcommand{\PDL}{\logicFont{PDL}\xspace}

\renewcommand{\PL}{\ensuremath{\mathrm{PL}}\xspace}
\newcommand{\TPL}{\ensuremath{\logicFont{TPL}}\xspace}

\newcommand{\CPL}{\ensuremath{\logicFont{CPL}}\xspace}

\newcommand{\formulaOne}{\ensuremath{\varphi}}
\newcommand{\formulaTwo}{\ensuremath{\psi}}
\newcommand{\formulaThree}{\ensuremath{\gamma}}

\newcommand{\pmLogicPref}[1]{\ensuremath{{#1}\kern-0.5ex\raise0.95ex\hbox{\tiny\logicFont{pref}}}}
\newcommand{\TPLPref}{\pmLogicPref{\TPL}}
\newcommand{\CPLPref}{\pmLogicPref{\CPL}}
\newcommand{\PDLPref}{\pmLogicPref{\PDL}}

\newcommand{\pmLogicX}[2]{\ensuremath{{#1}\kern-0.5ex\raise0.95ex\hbox{\tiny{#2}}}}

\newcommand{\pmLogicCum}[1]{\ensuremath{{#1}\kern-0.5ex\raise0.95ex\hbox{\normalfont\tiny\logicFont{cuml}}}}
\newcommand{\TPLCum}{\pmLogicCum{\TPL}}
\newcommand{\CPLCum}{\pmLogicCum{\CPL}}
\newcommand{\PDLCum}{\pmLogicCum{\PDL}}

\newcommand{\pmLogicStrongCum}[1]{\ensuremath{{#1}\kern-0.5ex\raise0.95ex\hbox{\normalfont\tiny\logicFont{cuml}[\logicFont{str}]}}}
\newcommand{\TPLStrongCum}{\pmLogicStrongCum{\TPL}}
\newcommand{\CPLStrongCum}{\pmLogicStrongCum{\CPL}}
\newcommand{\PDLStrongCum}{\pmLogicStrongCum{\PDL}}

\newcommand{\pmLogicAsym}[1]{\ensuremath{{#1}\kern-0.5ex\raise0.95ex\hbox{\normalfont\tiny\logicFont{cuml}[\logicFont{as}]}}}
\newcommand{\TPLAsym}{\pmLogicAsym{\TPL}}

\newcommand{\pmLogicSystemP}[1]{\ensuremath{{#1}\kern-0.5ex\raise0.95ex\hbox{\tiny\textbf{\logicFont{p}}}}}
\newcommand{\pmLogicSystemC}[1]{\ensuremath{{#1}\kern-0.5ex\raise0.95ex\hbox{\tiny\textbf{\logicFont{c}}}}}
\newcommand{\TPLSystemP}{\pmLogicSystemP{\TPL}}
\newcommand{\CPLSystemP}{\pmLogicSystemP{\CPL}}
\newcommand{\PDLSystemP}{\pmLogicSystemP{\PDL}}
\newcommand{\TPLSystemC}{\pmLogicSystemC{\TPL}}
\newcommand{\CPLSystemC}{\pmLogicSystemC{\CPL}}
\newcommand{\PDLSystemC}{\pmLogicSystemC{\PDL}}

\newcommand{\SystemC}{\textit{System~C}}
\renewcommand{\pdl}{\ensuremath{\PL(=)}}
\renewcommand{\pdl}{\ensuremath{\PL(\logicFont{dep})}}

\newcommand{\ssLogic}{\ensuremath{\mathscr{L}}}
\newcommand{\ssSystem}{\ensuremath{\mathbb{S}}}
\newcommand{\ssFormulas}{\ensuremath{\mathcal{L}}}
\newcommand{\ssInt}{\ensuremath{\Omega}}

\usepackage{todonotes}

\newcommand{\ksCiteAY}[1]{\citeauthor{#1}~(\citeyear{#1})}

\newcommand{\MYParagraph}[1]{\par\smallskip\noindent\textbf{#1}}
\newcommand{\MYSubsection}[1]{\MYParagraph{#1.}}

\newcommand{\Th}{\mathrm{Th}}
\newcommand{\Cn}{\mathrm{Cn}}
\newcommand{\Cableit}{\mathrm{C}_{{\nmableit}}}

\makeatletter
\newcommand{\textlabelmarker}[1]{%
    \protected@edef\@currentlabel{#1}%
    \phantomsection%
}
\newcommand{\textlabel}[2]{%
    \textlabelmarker{#1}%
    #1\label{#2}%
}
\makeatother

\begin{document}
\thispagestyle{plain}

\maketitle

\begin{abstract}
    \vspace{-0.5em}
This paper establishes and proves representation theorems for cumulative propositional dependence logic and for cumulative propositional logic with team semantics. Cumulative logics are famously given by System~C. For propositional dependence logic, we show that System~C entailments are exactly captured by cumulative models from Kraus, Lehmann and Magidor. On the other hand, we show that entailment in cumulative propositional logics with team semantics is exactly captured by cumulative and asymmetric models.
For the latter, we also obtain equivalence with cumulative logics based on propositional logic with classical semantics.
The proofs will be useful for proving representation theorems for other cumulative logics without negation and material implication.
\end{abstract}

\vspace{-1.5em}
\section{Introduction}
\label{sec:introduction}
The ability to reason is one of the central features of intelligent agents and thus a major concern of artificial intelligence. 
In this paper, we study the fusion of non-monotonic reasoning with team-based reasoning.
Such a combination is interesting, as it allows reasoning in which one can express settings involving a plurality of objects (the team-based component) while also taking into account extra-logical information, such as plausibility, exceptionality, reliability, preference, or typicality (the non-monotonic component).
Notably, the combination of these approaches provides a setting that cannot be formalized by neither of the approaches alone.%

Less is known about how to construct non-monotonic entailment relations for team-based logics. Known approaches for non-monotonic propositional logics or non-monotonic first-order logics rely heavily on the properties of the underlying logics, such as, e.g., the availability of all Boolean connectives, the law of excluded middle, and presence of material implication. However, these properties are not (always) available in team-based logics.
 There are many approaches to non-monotonic logics that take inspiration from~\cite{KS_Brewka1997}, from which some are also generic, e.g., MAK models~\cite{KS_Makinson1989}, KLM-style reasoning~\cite{KS_KrausLehmannMagidor1990}, characterization logics~\cite{KS_BaumannStrass2025}, or approximation fixpoint theory~\cite{KS_DeneckerMarekTruszczynski2000}.

So far, there are only two pioneering works that consider a combination of team semantics and non-monotonic reasoning. First, the work by \citeauthor{JY23} (\citeyear{JY23}), which considers a non-monotonic team-based modal logic in the context of formal analysis of natural language. The second is by \citeauthor{KS_SauerwaldMeierKontinen2025} (\citeyear{KS_SauerwaldMeierKontinen2025}) (SMK), which also contains a broader introduction and motivation, and focuses on non-monotonic reasoning in the style of \citeauthor{KS_KrausLehmannMagidor1990} (\citeyear{KS_KrausLehmannMagidor1990}) (KLM). 

KLM (\citeyear{KS_KrausLehmannMagidor1990}) showed for classical propositional logic that cumulative entailment relations provide a stable theory of reasoning with multiple representations. Cumulative entailment relations are obtained by reasoning via the axiomatic system known as {\SystemC}. Furthermore, a cumulative entailment relation \( \nmableit \) can be represented by a cumulative model \( \pmC \), by which \( \formulaOne\nmableit\formulaTwo \) amounts (intuitively) to checking whether all minimal models of \( \formulaOne \) in \( \pmC \) are models of \( \formulaTwo \).
SMK (\citeyear{KS_SauerwaldMeierKontinen2025}) show that entailment via preferential models (which are specific cumulative models) satisfies {\SystemC} in the context of propositional dependence logic.
However, SMK provides no representation theorems for any kind of general cumulative reasoning in the context of team-based logics.

\MYParagraph{Contributions.}
This work 
establishes cumulative reasoning as a stable approach in the context of team-based logics.
 Specifically, we will encounter the cumulative counterparts of the following team-based logics:
\begin{itemize}
    \item Propositional logic with team-based semantics \hfill(\( \TPL \))
    \item Propositional dependence logic \hfill(\( \PDL \))
\end{itemize}
For each of these logics, we consider, both, reasoning via {\SystemC} and via cumulative models as in the approach by KLM.
This leads to {\SystemC}-based entailment relations for \( \PDL \), denoted by \( \PDLSystemC \). 
Also, we will consider \( \PDL \) entailment relations via cumulative models, denoted by \( \PDLCum \).
With \( \TPLCum \) and \( \TPLSystemC \), we denote the respective approaches for \( \TPL \).
We will show the following representation results:
\begin{description}
    \item[Representation Theorem for Cumulative {\PDL}:]
    \( \PDLSystemC \) and \( \PDLCum \) define the same entailment relations.
           
    \item[Representation Theorem for Cumulative {\TPL}:]
    \( \TPLSystemC \) and \( \TPLCum \) define the same entailment relations.
    Furthermore, we show that the novel class of asymmetric models are a sufficient subclass of cumulative models for representation.
\end{description}
The proofs for showing these results demonstrate, in a principled way, how similar relationships between these classes of entailment relations can be obtained for other logics without negation and material implication.

\section{Preliminaries}
\label{sec:background_team_based_logic}
In this paper, we consider propositional logics from a model-theoretic perspective.
We denote by $\Prop=\{\;p_i\mid i\in \mathbb{N}\;\}$ the countably infinite set of propositional variables. 
We consider propositional formulas in negation normal form, i.e., \PL-formulas are formed by the grammar, where $p\in\Prop$:
\[\formulaOne\Coloneqq p\mid \neg p \mid \bot\mid\top\mid \formulaOne\wedge\formulaOne\mid\formulaOne\vee\formulaOne.\]
 We write $\Prop(\formulaOne)$ for the set of variables occurring in $\formulaOne$.

\MYParagraph{Classical Propositional Logic ({\CPL}).}
For a non-empty finite subset $N\subseteq \Prop$ of propositional variables, one defines for valuations $v\colon N\to\{0,1\}$ over \( N \) and \PL-formulas \( \formulaOne \):
\[ \llbracket \formulaOne\rrbracket^c =
\{\, v\colon N\to\{0,1\}\mid v\models\formulaOne\}. \]
The valuation function $v$ is extended to the \emph{set of all \PL-formulas satisfying} $\Prop(\formulaOne)\subseteq N$, $\PL(N)$, in the usual way.
We denote by $\allModels{N}$ the set of all assignments over $N$. 
%
%
\begin{comment}
    Let $v:\Prop\to\{0,1\}$ be a valuation. For any classical $\formulaOne$, the truth relation $v\models\formulaOne$ is defined inductively as:
    \begin{itemize}
        \item $v\models p$ ~~iff~~ $v(p)=1$;
        \item $v\models \bot$ ~~is never the case;
        \item $v\models \top$ ~~is always the case;
        \item $v\models\neg\formulaOne$ ~~iff~~ $v\not\models\formulaOne$;
        \item $v\models\formulaOne\wedge\formulaTwo$ ~~iff~~ $v\models\formulaOne$ and $v\models\formulaTwo$;
        \item $v\models\formulaOne\vee\formulaTwo$ ~~iff~~ $v\models\formulaOne$ or $v\models\formulaTwo$.
    \end{itemize}
\end{comment}
%
%
%
We write $\formulaOne\models^c\formulaTwo$ for $ \llbracket \formulaOne\rrbracket^c \subseteq \llbracket \formulaTwo\rrbracket^c $ and $\formulaOne\equiv^c\formulaTwo$ if both $\formulaOne\models^c\formulaTwo$ and $\formulaTwo\models^c\formulaOne$ are true.
%
%
%

%

%
%
%

\begin{comment}

    \begin{theorem}\label{PL_usual_sem_expressive_complete}
        Let $N=\{p_1,\dots,p_n\}$. For every set $X\subseteq 2^N$, there is a \PL-formula $\formulaOne$
        such that $\llbracket \formulaOne\rrbracket_N^c=X$.
    \end{theorem}
    \begin{proof}
        Let 
        \[\formulaOne=\bigvee_{v\in X}(p_1^{v(1)}\wedge\dots\wedge p_n^{v(n)}),\]
        where $v(i)$ is short for $v(p_i)$, $p_i^{1}=p_i$ and $p_i^0=\neg p_i$ for any $1\leq i\leq n$, and in particular $\bigvee\emptyset:=\bot$. For any valuation $u:N\to\{0,1\}$, we have that 
        \[u\models \formulaOne\iff u\models p_1^{v(1)}\wedge\dots\wedge p_n^{v(n)}\text{ for some }v\in X\iff u=v\in X.\] 
    \end{proof}

    \begin{corollary}\label{PL_usual_sem_dnf}
        Let $N=\{p_1,\dots,p_n\}$. For every classical formula $\formulaOne(p_1,\dots,p_n)$, we have that
        \[\formulaOne\equiv^c\bigvee_{v\in\llbracket \formulaOne\rrbracket^c_N}(p_1^{v(1)}\wedge\dots\wedge p_n^{v(n)}).\]
    \end{corollary}
\end{comment}

\MYParagraph{Prop.\ Logic with Team Semantics ({\TPL}).}
Next, we define \emph{team semantics} for \PL-formulas (cf. \cite{HannulaKVV15,YangV16}). A \emph{team $X$} is a set of valuations for some finite $N\subseteq \Prop$. The \emph{domain $N$ of $X$} is denoted by  $\dom(X)$, and $\mathbb{T}_{N}$ denotes the set of all such teams.%

\begin{definition}[Team semantics of \PL]
    Let $X$ be a team. For any \PL-formula $\formulaOne$ with $\dom(X)\supseteq \Prop(\formulaOne)$, the \emph{satisfaction relation}, $X\models\formulaOne$, is defined inductively as:
    \begin{align*}
        &X\models p&&\text{if for all }v\in X: v\models p,\\
        &X\models\neg p&&\text{if for all } v\in X: v\not\models p,\\
        &X\models \top&&\text{is always the case},\\
        &X\models \bot&&\text{if } X=\emptyset,\\
        &X\models\formulaOne\wedge\formulaTwo&&\text{if } X\models\formulaOne \text{ and } X\models\formulaTwo,\\
        &X\models\formulaOne\vee\formulaTwo&&\text{if there exist } Y,Z\subseteq X\\
        &&&\text{s.t. }X=Y\cup Z, Y\models\formulaOne, \text{ and }Z\models\formulaTwo.
    \end{align*}
\end{definition}
The set of all teams \( X \) with \( X\models\formulaOne \) is denoted by \( \modelsOf{\formulaOne}^t \).
For any two \PL-formulas \( \formulaOne,\formulaTwo \), we write $\formulaOne\models^t\formulaTwo $ if \( \modelsOf{\formulaOne}^t \subseteq \modelsOf{\formulaTwo}^t \).
Write $\formulaOne\equiv^t\formulaTwo$ if both $\formulaOne\models^t\formulaTwo$ and $\formulaTwo\models^t\formulaOne$ are true.
We define the following properties for a formula~$\formulaOne$:
\begin{itemize}
    \item $X\models \formulaOne \iff \text{for all } v\in X,~\{v\}\models\formulaOne$. \hfill {\small(\textbf{Flatness})}
    \item $\emptyset \models \formulaOne$.\hfill{\small(\textbf{Empty team})}
    \item If $X \,{\models}\, \formulaOne$ and $Y\,{\subseteq}\, X$, then ${Y\models \formulaOne}$.\hfill{\small(\textbf{Downward closure})}
\end{itemize}
\begin{proposition}\label{prop:tpl_pincl_properties}
    {\TPL} has the properties flatness, empty team, and downward closure. 
\end{proposition}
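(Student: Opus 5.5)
I will prove flatness first, by structural induction on the \PL-formula $\formulaOne$. Empty team and downward closure then follow at once. The claim to be established at each step is: for every team $X$ with $\dom(X)\supseteq\Prop(\formulaOne)$, we have $X\models\formulaOne$ iff $\{v\}\models\formulaOne$ for all $v\in X$.

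The base cases are direct from the definition. For $p$ and $\neg p$, the clauses are already universally quantified over $v\in X$. For a singleton $\{v\}$ they reduce to the classical conditions $v\models p$ and $v\not\models p$, so both sides coincide. For $\top$ both sides hold trivially. For $\bot$, the singleton $\{v\}$ is nonempty and hence never satisfies $\bot$. So the right-hand side holds iff $X$ has no elements, i.e.\ iff $X=\emptyset$, which is exactly the clause for $X\models\bot$.

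For $\formulaOne\wedge\formulaTwo$, I apply the induction hypothesis to each conjunct and swap the two universal quantifiers with the conjunction. For $\formulaOne\vee\formulaTwo$ there are two directions.
\begin{itemize}
\item[($\Rightarrow$)] Suppose $X=Y\cup Z$ with $Y\models\formulaOne$ and $Z\models\formulaTwo$. Take $v\in X$. Then $v\in Y$ or $v\in Z$, and by the induction hypothesis $\{v\}\models\formulaOne$ or $\{v\}\models\formulaTwo$. Now split $\{v\}$ as $\{v\}\cup\emptyset$ or $\emptyset\cup\{v\}$. The empty part satisfies the other disjunct by the induction hypothesis applied to the empty team, where the condition is vacuous. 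Hence $\{v\}\models\formulaOne\vee\formulaTwo$.
\item[($\Leftarrow$)] Suppose every singleton $\{v\}\subseteq X$ satisfies the disjunction. Then in the split of $\{v\}$, the singleton itself must satisfy $\formulaOne$ or $\formulaTwo$. Set $Y=\{v\in X\mid \{v\}\models\formulaOne\}$ and $Z=\{v\in X\mid \{v\}\models\formulaTwo\}$. Then $X=Y\cup Z$, and by the induction hypothesis $Y\models\formulaOne$ and $Z\models\formulaTwo$.
\end{itemize}
This disjunction case is the only step needing care. Its subtle point is that the empty team is handled uniformly by the vacuous quantifier, which is also why I establish flatness in the stated form, including $X=\emptyset$.

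Finally, empty team follows from flatness applied to $X=\emptyset$, since the condition ``for all $v\in\emptyset$'' holds vacuously. Downward closure follows because $Y\subseteq X$ and $X\models\formulaOne$ give $\{v\}\models\formulaOne$ for all $v\in X$, in particular for all $v\in Y$, so $Y\models\formulaOne$ by flatness.
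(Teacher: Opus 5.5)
Your proof is correct: the structural induction for flatness, with the empty subteam in the $\vee$ case covered by the vacuous instance of the induction hypothesis and the split $Y=\{v\in X\mid \{v\}\models\varphi\}$, $Z=\{v\in X\mid \{v\}\models\psi\}$ for the converse, followed by reading off the empty team and downward closure properties, is the standard argument. The paper gives no proof of this proposition and takes it as a known fact from the team-semantics literature it cites, so there is no differing route to compare against.
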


Due to the flatness property, logical entailment of propositional logic with team-based semantics $\models^t$ and %
classical semantics $\models^c$ coincide. 
\MYParagraph{Propositional Dependence Logic ({\PDL}).}
\label{sec:pdl_pincl}
A \emph{(propositional) dependence atom} is a string %
$\dep{\vec{a},b}$, 
in which $\vec{a}=a_1,\dots,a_k$ and $b$ are propositional variables from \Prop. 
A team $X$ \emph{satisfies a dependence atom}, $X \models \dep{\vec{a},b}$, if for all $v, v' \in X$, $v(\vec{a})=v'(\vec{a})$ implies $v(b)=v'(b)$.
A dependence atom where the first component is empty will be abbreviated as $\dep{p}$ and called a \emph{constancy atom}. %
The language of \emph{propositional dependence logic} (\pdl) is defined as \PL-formulas extended by dependence atoms.
\begin{example}\label{example_dep_atm_propositional}
    Consider the team $X$ over $\{p,q,r\}$ defined by:\smallskip

        \noindent\;\;\begin{tabular}{cccc}
            \toprule
            &$p$&$q$&$r$\\\midrule
            $v_1$&$1$&$0$&$0$\\
            $v_2$&$0$&$1$&$0$\\
            $v_3$&$0$&$1$&$0$\\
            \bottomrule
        \end{tabular}\hfill
    \begin{minipage}[c]{0.3\textwidth}
        \vspace{0pt}
    Here, we have that $X\models\dep{p,q}$ and $X\models\dep{r}$. 
    Moreover, we have that $X\models\dep{p}\vee \dep{p}$, however it is true that $X\not \models\dep{p}$ as the value of $p$ is not overall constant.     
    \end{minipage}
\end{example}
\begin{comment}
    As illustrated in the above example, in general, propositional dependence atoms with multiple arguments can be easily defined using constancy atoms.

    \begin{fact}
        $\displaystyle\dep{p_1\dots p_n,q}\equiv\bigvee_{v\in 2^N}(p_1^{v(1)}\wedge\dots\wedge p_n^{v(n)}\wedge\dep{q})$, where $N=\{p_1,\dots,p_n\}$.
    \end{fact}
    
\end{comment}

%
%
\begin{proposition}\label{prop:pdl_pincl_properties}
    \PDL has the empty team and the downward closure property, but not the flatness property. 
\end{proposition}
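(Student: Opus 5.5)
The statement has three parts: empty team, downward closure, and failure of flatness. For the two positive properties I will argue by a single structural induction on \PDL-formulas $\formulaOne$. The claim proved is that for every team $X$ with $\dom(X)\supseteq\Prop(\formulaOne)$, we have $\emptyset\models\formulaOne$, and that $X\models\formulaOne$ together with $Y\subseteq X$ gives $Y\models\formulaOne$.

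The base cases are as follows.
\begin{itemize}
    \item For literals $p$, $\neg p$ and for $\top$, both properties follow from the universal quantification over $v\in X$ in the clauses: this quantification is vacuously true on $\emptyset$ and is preserved when passing to subsets. Alternatively one can simply cite Proposition~\ref{prop:tpl_pincl_properties}.
    \item For $\bot$, the empty team satisfies it by definition. If $X\models\bot$ then $X=\emptyset$, so every $Y\subseteq X$ is empty as well.
    \item For a dependence atom $\dep{\vec a,b}$, the condition is universal over pairs $v,v'\in X$. It therefore holds vacuously on $\emptyset$, and any pair taken from $Y\subseteq X$ is already a pair from $X$.
\end{itemize}

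The inductive steps go through as follows.
\begin{itemize}
    \item Conjunction is immediate from the induction hypothesis.
    \item For the empty team and disjunction, split $\emptyset=\emptyset\cup\emptyset$ and apply the induction hypothesis to both disjuncts.
    \item For downward closure and disjunction, which is the only step needing a small construction, suppose $X=Y_1\cup Z_1$ with $Y_1\models\formulaOne$ and $Z_1\models\formulaTwo$, and let $Y\subseteq X$. Set $Y'=Y\cap Y_1$ and $Z'=Y\cap Z_1$. Then $Y=Y'\cup Z'$, and by the induction hypothesis $Y'\models\formulaOne$ and $Z'\models\formulaTwo$, so $Y\models\formulaOne\vee\formulaTwo$.
\end{itemize}

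For the failure of flatness, Example~\ref{example_dep_atm_propositional} already gives a counterexample. Take $\formulaOne=\dep{p}$ and the team $X=\{v_1,v_2\}$ from that example. Every singleton team satisfies $\dep{p}$ trivially, since the two valuations compared are always equal. But $X\not\models\dep{p}$, because $v_1(p)=1\neq 0=v_2(p)$. So the right-to-left direction of flatness fails.

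None of the steps is a real obstacle. The only place needing care is the disjunction case of downward closure, where the intersection construction above must be used rather than reusing $Y_1$ and $Z_1$ directly.
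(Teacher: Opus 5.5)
Your proof is correct. The paper states this proposition without proof, treating it as a standard fact from the team-semantics literature, and your argument is the standard one behind it: structural induction, with the intersections $Y\cap Y_1$ and $Y\cap Z_1$ handling the disjunction case of downward closure, plus the counterexample $\dep{p}$ on the team $\{v_1,v_2\}$ from Example~\ref{example_dep_atm_propositional} to refute flatness.
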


\MYParagraph{Generic View on Logics.} 
Some parts of this paper require a generic perspective on logics, which we discuss next.
A \emph{satisfaction system} is a triple \( \ssSystem = \tuple{\ssFormulas,\ssInt,\models} \), where \( \ssFormulas \) is the set of \emph{formulas}, \( \ssInt \) is the set of \emph{interpretations}, and  \( {\models} \subseteq \ssInt \times \ssFormulas \) is the \emph{satisfaction relation}. 
We write $\modelsOf{\formulaOne}^\ssSystem=\{ \omega \in \ssInt \mid \omega \models \formulaOne \} $ for the set of all models of the formula $\alpha\in\ssFormulas$.
An \emph{entailment relation} for a satisfaction system is a relation \( {\Vdash} \subseteq \ssFormulas \times \ssFormulas\) such that %
$\alpha \Vdash \gamma$ if and only if $\beta \Vdash \gamma$, 
whenever $\modelsOf{\formulaOne}^\ssSystem=\modelsOf{\formulaTwo}^\ssSystem $.
A satisfaction system \( \ssSystem \) together with an entailment relation \( \Vdash \) is called a \emph{logic} and  denoted by
\( \ssLogic = \tuple{\ssFormulas,\ssInt,\models,\Vdash} \).
The propositional logics discussed in this section fit into this general model-theoretic view for each \( N \subseteq \Prop \) as follows:
\begin{center}
    \begin{tabular}{@{}l@{\hskip0.05em}l@{}}
        \( \CPL_{N} \)  
        & \( {=}\, \tuple{\PL(N),\allModels{N},\models,\models^c} \)\\
        \( \TPL_{N} \) 
        & \( {=}\, \tuple{\PL(N),\allTeams{N},\models,\models^t} \)\\
        \( \PDL_{N} \) 
        & \( {=}\, \tuple{\pdlN, \allTeams{\!N},\models,\models^t}\)
    \end{tabular}
\end{center}
When there is no ambiguity, we will write $\models$ instead of $\models^t$.
Moreover, we will use \( \CPL \) to denote the class consisting of all logics \( \CPL_{N} \) for any non-empty finite subset $N\subseteq \Prop$.
The classes \( \TPL \) and \( \PDL \) are defined analogous.

\section{{\SystemC} and Cumulative Models}
\label{sec:background_klmstyle}
We consider the construction of  entailment relations.

\MYSubsection{{\SystemC}}
 We make use of the following rules for calculi:
\begin{center}
	\vspace{-1em}
	\begin{minipage}[b]{0.51\linewidth}
		\begin{align}
			&\frac{\formulaOne\equiv\formulaTwo\hspace{0.5cm}\formulaOne\nmableit\formulaThree}{\formulaTwo\nmableit\formulaThree} \tag{LLE}\label{pstl:LLE}\\[0.25em]
			&\frac{\formulaOne\land\formulaTwo\nmableit\formulaThree\hspace{0.5cm}\formulaOne\nmableit\formulaTwo}{\formulaOne\nmableit\formulaThree} \tag{Cut}\label{pstl:Cut}
		\end{align}
	\end{minipage}\begin{minipage}[b]{0.45\linewidth}
		\begin{align}
			&\frac{\formulaOne\models\formulaTwo\hspace{0.5cm}\formulaThree\nmableit\formulaOne}{\formulaThree\nmableit\formulaTwo} \tag{RW}\label{pstl:RW}\\[0.25em]
			&\frac{\formulaOne\nmableit\formulaTwo\hspace{0.5cm}\formulaOne\nmableit\formulaThree}{\formulaOne\land\formulaTwo\nmableit\formulaThree} \tag{CM}\label{pstl:CM}
		\end{align}
	\end{minipage}
\end{center}
Note that \( \models \) is a placeholder for the entailment relation \( \Vdash \) of an underlying logic \( \ssLogic = \tuple{\ssFormulas,\ssInt,\models,\Vdash}  \), and \( \equiv \) is the respective semantic equivalence from $\ssLogic$.
{\SystemC}  consists of the rules  \eqref{pstl:RW}, \eqref{pstl:LLE},  \eqref{pstl:CM}, \eqref{pstl:Cut} and reflexivity  (\textlabel{Ref}{pstl:Ref}) $\formulaOne \nmableit \formulaOne$ for all formulas \cite{KS_KrausLehmannMagidor1990,KS_Gabbay1984}.
We say that an entailment relation \( \nmableit \) \emph{satisfies} {\SystemC} if \( \nmableit \) is closed under all rules of {\SystemC}.

\MYSubsection{Relational Models}
For a relation \( {\pmR} \subseteq \mathcal{S} \times \mathcal{S} \) on a set \( \mathcal{S} \) and a subset \( S \subseteq \mathcal{S} \), an element \( s \in S \) is called \emph{minimal in \( S \) with respect to \( \pmR \)} if for each \( s' \in S \) holds  \( \lnot(s' \pmR s) \). 
Then, \( \minOf{S}{{\pmR}} \)  is the set of all \( s \in S \)  that are minimal in \( S \) with respect to \( \pmR \).
Moreover, for a set of interpretations \( M \) and a formula \( \varphi \) of  \( \ssLogic = \tuple{\ssFormulas,\ssInt,\models,\Vdash}  \), we write \( M \models \varphi \) if for all \( \omega\in M \) we have that \( \omega \models \varphi \).
\begin{definition}[Shoman~\citeyear{KS_Shoham1988}, Dix and Makinson~\citeyear{KS_Makinson1989,KS_DixMakinson1992}]
	Let \( \ssLogic = \tuple{\ssFormulas,\ssInt,\models,\Vdash}  \) be a logic.
	A \emph{relational model} for \(  \ssLogic  \) is a triple \( \pmM=\tuple{\pmS,\pmL,\pmR} \) where \( \pmS \) is a set, \( \pmL \colon \pmS \to \mathcal{P}(\Omega) \), and \( \pmR \) is a binary relation on \( \pmS \).
\end{definition}
We say \( S \subseteq \pmS \) is \emph{smooth} if for each \( s \in S \), we either have that \( s\in \minOf{S}{\pmR} \), or there exists a state \( s'\in \minOf{S}{\pmR} \) with \( s' \pmR s \).
For \( \formulaOne \in \ssFormulas \) and \( s\in\pmS \), we denote by \( \statesOf{\formulaOne}=\{\, s \in \pmS \mid \ell(s) \models \formulaOne\, \} \) the set of states that satisfy \( \formulaOne \). %
With \( \minOf{\modelsOf{\formulaOne}}{\pmR} = \bigcup\{\, \pmL(s) \mid s\in \minOf{\statesOf{\formulaOne}}{\pmR}  \,\} \), we denote the set of all interpretations that appear in \( \pmL(s) \) for any minimal state \( s \) that satisfy \( \formulaOne \).
We will deal, in this paper, with specific types of relational models, which we define in the following.

\begin{definition}
	A relational model \( \pmC=\tuple{\pmS,\pmL,\pmR} \) for a logic \( \ssLogic = \tuple{\ssFormulas,\ssInt,\models,\Vdash}  \) is called 
    \emph{cumulative} if for all \( \formulaOne \in \ssFormulas \) the set \( \statesOf{\formulaOne} \) is smooth.
    A cumulative model \( \pmC \) is \emph{strong} if \( \pmR \) is asymmetric ($x R y$ implies $\lnot(y R x)$) and \( \minOf{\statesOf{\formulaOne}}{\pmR} \) has exactly one element for each formula \( \formulaOne \in \ssFormulas \).
\end{definition}
Because cumulative models satisfy smoothness, they avoid the problem of reasoning via arbitrary relational models, in which the set $\minOf{\modelsOf{\formulaOne}}{\pmR}$ might be empty, even when $\statesOf{\formulaOne}$ is non-empty. 
Entailment relations, which are induced by a relational model, are defined as follows.
\begin{definition}
	Let \( \ssLogic = \tuple{\ssFormulas,\ssInt,\models,\Vdash}  \) be a logic.
	The \emph{entailment relation} \( {\nmableit_{\pmM}} \subseteq \ssFormulas \times \ssFormulas \) for a relational model \( \pmM=\tuple{\pmS,\pmL,\pmR} \) for \(  \ssLogic \) is given by
	\begin{equation*}
		\formulaOne \nmableit_{\pmM} \formulaTwo \text{ if }  \minOf{\modelsOf{\formulaOne}}{\pmR} \subseteq \modelsOf{\formulaTwo}\ .
	\end{equation*}
\end{definition}
An entailment relation \( {\nmableit} \subseteq \ssFormulas \times \ssFormulas \) is called \emph{(strongly) cumulative} if there is a (strong) cumulative model \( \pmC \) for \( \ssLogic \) such that  \( {\nmableit} =   {\nmableitC}  \). 

\MYSubsection{Classes of Entailment Relations}
If \( \logicFont{L} \) is a class of logics (such as \PDL, \TPL, or \CPL),  we denote with \( \pmLogicSystemC{\logicFont{L}} \)  the class of all entailment relations \( {\nmableit} \) for \( \tuple{\ssFormulas,\ssInt,\models,\Vdash} \in \logicFont{L} \) that satisfy {\SystemC}. 
Similarly, we use  \( \pmLogicCum{\logicFont{L}} \) (\( \pmLogicStrongCum{\logicFont{L}} \)) for the class of all (strong) cumulative entailment relations.

\section{Representation Theorems}
In this section, we will establish representation theorems for cumulative logics, i.e., logics that satisfy {\SystemC}. A classic result is that cumulative classical propositional logics coincide with classical propositional logics that satisfy {\SystemC}.
\begin{proposition}[KLM, \citeyear{KS_KrausLehmannMagidor1990}]
	\label{thm:CPLCPresepresention}
	$ \CPLCum = \CPLStrongCum = \pmLogicX{\CPL}{\logicFont{c}}$.
\end{proposition}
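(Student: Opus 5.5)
We establish the three inclusions \(\CPLStrongCum \subseteq \CPLCum \subseteq \pmLogicX{\CPL}{\logicFont{c}} \subseteq \CPLStrongCum\). The first is immediate, since every strong cumulative model is in particular cumulative.

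\emph{Soundness} (\(\CPLCum \subseteq \pmLogicX{\CPL}{\logicFont{c}}\)). Fix a cumulative model \(\pmC=\tuple{\pmS,\pmL,\pmR}\) for \(\CPL_N\) and check each rule of {\SystemC} for \(\nmableit_{\pmC}\).
\begin{itemize}
\item \eqref{pstl:Ref}: every \(s\in\statesOf{\formulaOne}\) has \(\pmL(s)\models\formulaOne\).
\item \eqref{pstl:LLE}: \(\formulaOne\equiv\formulaTwo\) gives \(\statesOf{\formulaOne}=\statesOf{\formulaTwo}\).
\item \eqref{pstl:RW}: this follows by monotonicity of inclusion.
\end{itemize}
For \eqref{pstl:CM} and \eqref{pstl:Cut}, the key lemma is the following. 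If \(\formulaOne\nmableit_{\pmC}\formulaTwo\), then \(\minOf{\statesOf{\formulaOne}}{\pmR}=\minOf{\statesOf{\formulaOne\land\formulaTwo}}{\pmR}\). We prove it as follows.
\begin{itemize}
\item Each \(\formulaOne\)-minimal state satisfies \(\formulaTwo\), hence lies in \(\statesOf{\formulaOne\land\formulaTwo}\subseteq\statesOf{\formulaOne}\), and so it is minimal there.
\item Conversely, let \(s\) be \((\formulaOne\land\formulaTwo)\)-minimal. If \(s\) were not \(\formulaOne\)-minimal, smoothness of \(\statesOf{\formulaOne}\) would give an \(\formulaOne\)-minimal \(s'\pmR s\). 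But \(s'\) satisfies \(\formulaTwo\), contradicting the minimality of \(s\).
\end{itemize}
Both rules then follow at once from this equality of minimal sets.

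\emph{Completeness} (\(\pmLogicX{\CPL}{\logicFont{c}}\subseteq\CPLStrongCum\)). This is the substantial direction, and we follow KLM. Given \(\nmableit\) satisfying {\SystemC}, first derive the standard consequences And, \(\formulaOne\nmableit\formulaTwo\) iff \(\formulaOne\nmableit\formulaOne\land\formulaTwo\), and the reciprocity rule Equiv (from \(\formulaOne\nmableit\formulaTwo\), \(\formulaTwo\nmableit\formulaOne\), \(\formulaOne\nmableit\formulaThree\) infer \(\formulaTwo\nmableit\formulaThree\)). Since \(N\) is finite, every formula is equivalent to a DNF, so only finitely many formulas need be considered up to \eqref{pstl:LLE}.

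For each \(\formulaOne\) with \(\formulaOne\not\nmableit\bot\), let \(C(\formulaOne)\) be the conjunction of all \(\formulaTwo\) with \(\formulaOne\nmableit\formulaTwo\), taken up to equivalence. Then \(\formulaOne\nmableit\formulaTwo\) iff \(C(\formulaOne)\models\formulaTwo\). The states of the model are these formulas, with \(\pmL(\formulaOne)=\modelsOf{C(\formulaOne)}^c\).

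The order is defined by \(\formulaTwo\pmR\formulaOne\) iff \(\formulaTwo\) and \(\formulaOne\) have different normal forms, \(\pmL(\formulaTwo)\models\formulaOne\), and \(\formulaTwo\) is "more normal" in the KLM sense, i.e. \(\formulaOne\lor\formulaTwo\nmableit\formulaTwo\). Here we additionally use that disjunction is available classically. We then verify three things.
\begin{itemize}
\item The unique minimal state of \(\statesOf{\formulaThree}\) is the state \(\formulaThree\) itself, or an equivalent one.
\item Smoothness holds because of this unique minimum.
\item Asymmetry follows from Equiv, which yields identical normal forms under mutual preference.
\end{itemize}
This gives \(\nmableit=\nmableit_{\pmC}\) with \(\pmC\) strong.

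\emph{Main obstacle.} The main obstacle is showing that \(\formulaThree\) is \(\pmR\)-below every other \(\formulaThree\)-satisfying state, and that no other state lies below it. Concretely: if \(\pmL(\formulaOne)\models\formulaThree\), we must derive \(\formulaThree\lor\formulaOne\nmableit\formulaThree\). Cumulative logics lack the Or rule, so this needs a careful argument via Cut and CM applied to \(\formulaThree\lor\formulaOne\).

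If that route stalls, we fall back on KLM's original construction. There, states are pairs and the order is defined by "\(\formulaTwo\) is at least as normal as \(\formulaOne\)", which KLM show to be a strict order with the needed smoothness. Since the statement is attributed to KLM, it is legitimate to import their Theorem 3.25 for this step and merely check that strongness (uniqueness of minima, asymmetry) holds in their canonical model.
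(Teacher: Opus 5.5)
Your soundness half is correct and matches the standard case check (the supplement's proof of Proposition~\ref{prop:systemCteams}). Your lemma that $\minOf{\statesOf{\formulaOne}}{\pmR}=\minOf{\statesOf{\formulaOne\land\formulaTwo}}{\pmR}$ whenever $\formulaOne\nmableit_{\pmC}\formulaTwo$ is a clean way to get CM and Cut. The gap is in completeness: the order you define does not work.

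First, the clause $\pmL(\formulaTwo)\models\formulaOne$ in ``$\formulaTwo\pmR\formulaOne$'' points the wrong way. Membership $\formulaOne\in\statesOf{\formulaThree}$ only gives $\formulaOne\nmableit\formulaThree$. But $\formulaThree\pmR\formulaOne$ needs $\formulaThree\nmableit\formulaOne$, and the two together make $\formulaOne\sim\formulaThree$. For instance, take $N=\{p\}$ and the relation that prefers the $p$-world $w_1$ over $w_0$, so $\top\nmableit p$. Your states $\top,p,\lnot p$ carry labels $\{w_1\},\{w_1\},\{w_0\}$. No state can lie below $\lnot p$, since that would need $\top\nmableit\lnot p$ or $p\nmableit\lnot p$. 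Hence $\lnot p\in\minOf{\statesOf{\top}}{\pmR}$, and $\top\nmableit_{\pmC}p$ fails.

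Second, even with that clause reversed, you need your ``main obstacle'': that $\formulaOne\nmableit\formulaThree$ implies $\formulaOne\lor\formulaThree\nmableit\formulaThree$. This is an instance of Or and is not valid for cumulative relations. Over $\{p,q\}$, take states $c,d,e$ with $c\pmR d$, $c\pmR e$, $d\pmR e$ and labels $\{(1,0),(0,1)\}$, $\{(1,1)\}$, $\{(1,0)\}$, giving values of $(p,q)$. This is a cumulative, indeed ordered, model in which $p\nmableit q$ holds but $p\lor q\nmableit q$ fails. So no argument via Cut and CM can close this step.

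Your fallback does not rescue this. Pair-states ordered by ``at least as normal'' ($\formulaOne\lor\formulaTwo\nmableit\formulaTwo$) are KLM's \emph{preferential} construction, which depends on Or. Moreover, a cumulative model whose relation is a strict partial order always validates Loop, which {\SystemC} does not derive. So no such model can represent every relation satisfying {\SystemC}.

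The missing idea is the KLM canonical model. The paper cites KLM for \( \CPL \) and transplants this model to \( \PDL \) in Proposition~\ref{prop:PDLRepSystemCconstruction}. The construction:
\begin{itemize}
\item States are the $\sim$-classes, where $\formulaOne\sim\formulaTwo$ iff $\formulaOne\nmableit\formulaTwo$ and $\formulaTwo\nmableit\formulaOne$.
\item The label of $[\formulaOne]$ is its set of normal worlds $\modelsOf{C(\formulaOne)}^c$.
\item Set $[\formulaThree]\pmR[\formulaOne]$ iff the classes differ and $\formulaOne\nmableit\formulaThree'$ for some $\formulaThree'\sim\formulaThree$.
\end{itemize}
Now $[\formulaOne]\in\statesOf{\formulaThree}$ means exactly $\formulaOne\nmableit\formulaThree$, so $[\formulaThree]\pmR[\formulaOne]$ holds directly, with no appeal to Or. Nothing in $\statesOf{\formulaThree}$ lies below $[\formulaThree]$: from $\formulaTwo\nmableit\formulaThree$, $\formulaThree\nmableit\formulaTwo'$ and $\formulaTwo'\sim\formulaTwo$, reciprocity gives $\formulaTwo'\sim\formulaThree$. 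Asymmetry follows in the same way.

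Finally, do not discard the formulas with $\formulaOne\nmableit\bot$. Their class must be a state with label $\emptyset$. Otherwise $\statesOf{\bot}=\emptyset$, so $\minOf{\statesOf{\bot}}{\pmR}$ does not have exactly one element and the model is not strong.
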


Here we will prove that when basing {\SystemC} on propositional dependence logic, there is also the same connection between cumulative logics and cumulative models.
We will also show that logics based on preferential and cumulative models for $\TPL$ correspond to {\SystemC} logics for $\CPL$.

\subsection{Propositional Dependence Logic}
\label{sec:SystemCPDL}
We will show the following representation result that establishes an equivalence between the class of cumulative propositional dependence logics and propositional dependence logics that satisfy {\SystemC}.
\begin{theorem}
	\label{thm:PDLCPresepresention}
	$ \PDLCum = \PDLStrongCum = \PDLSystemC $.
\end{theorem}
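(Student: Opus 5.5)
The plan is to prove the chain of inclusions
\[ \PDLStrongCum \subseteq \PDLCum \subseteq \PDLSystemC \subseteq \PDLStrongCum . \]
The first inclusion is immediate, since every strong cumulative model is cumulative.

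\textbf{Soundness.} For the second inclusion I will fix a cumulative model \( \pmC=\tuple{\pmS,\pmL,\pmR} \) for some \( \PDL_N \) and check each rule of {\SystemC} for \( \nmableitC \) directly. Nothing PDL-specific is needed, except that \( \statesOf{\formulaOne\land\formulaTwo}=\statesOf{\formulaOne}\cap\statesOf{\formulaTwo} \), which holds because \( \pmL(s)\models\formulaOne\land\formulaTwo \) means every team in \( \pmL(s) \) satisfies both conjuncts.
\begin{itemize}
\item \eqref{pstl:Ref}: the label of a state in \( \statesOf{\formulaOne} \) satisfies \( \formulaOne \).
\item \eqref{pstl:LLE}: equivalent formulas have the same set \( \statesOf{\cdot} \).
\item \eqref{pstl:RW}: this follows from \( \modelsOf{\formulaOne}\subseteq\modelsOf{\formulaTwo} \).
\item \eqref{pstl:CM} and \eqref{pstl:Cut}: these follow from smoothness in the usual KLM way. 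If \( \formulaOne\nmableitC\formulaTwo \), then every minimal state of \( \statesOf{\formulaOne} \) lies in \( \statesOf{\formulaTwo} \). Smoothness of \( \statesOf{\formulaOne} \) then gives \( \minOf{\statesOf{\formulaOne\land\formulaTwo}}{\pmR}=\minOf{\statesOf{\formulaOne}}{\pmR} \). For this, a non-minimal state of \( \statesOf{\formulaOne} \) is dominated by a minimal one, which lies in \( \statesOf{\formulaOne\land\formulaTwo} \). Conversely, a state of \( \statesOf{\formulaOne\land\formulaTwo} \) that is minimal there is minimal in \( \statesOf{\formulaOne} \), because otherwise a minimal dominator inside \( \statesOf{\formulaOne\land\formulaTwo} \) would exist.
\end{itemize}

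\textbf{Completeness.} For the third inclusion, let \( \nmableit \) on \( \PDL_N \) satisfy {\SystemC}. The key observation is that \( \allTeams{N} \) is finite, so there are only finitely many formulas up to \( \equiv \).

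First I derive the And rule in {\SystemC} by the usual steps: from \( \formulaOne\nmableit\formulaTwo \) and \( \formulaOne\nmableit\formulaThree \) we obtain \( \formulaOne\land\formulaTwo\nmableit\formulaThree \) by CM. Reflexivity and RW give \( \formulaOne\land\formulaTwo\land\formulaThree\nmableit\formulaTwo\land\formulaThree \). Two applications of Cut (with LLE) then yield \( \formulaOne\nmableit\formulaTwo\land\formulaThree \).

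Hence for each \( \formulaOne \) the finite conjunction \( C(\formulaOne) \) of representatives of all \( \nmableit \)-consequences of \( \formulaOne \) satisfies \( \formulaOne\nmableit C(\formulaOne) \). By RW, \( \formulaOne\nmableit\formulaTwo \) iff \( C(\formulaOne)\models\formulaTwo \).

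Next I show that \( \formulaOne\nmableit\formulaTwo \) and \( \formulaTwo\nmableit\formulaOne \) together imply that \( \formulaOne \) and \( \formulaTwo \) have the same consequences. By CM we get \( \formulaOne\land\formulaTwo\nmableit\formulaThree \) from \( \formulaOne\nmableit\formulaThree \), and then Cut with \( \formulaTwo\nmableit\formulaOne \) (and LLE) gives \( \formulaTwo\nmableit\formulaThree \).

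\textbf{The model.} Let \( \pmS \) be the set of classes \( [\formulaOne] \) of formulas with the same consequence set, and put \( \pmL([\formulaOne])=\modelsOf{C(\formulaOne)} \). Then \( [\formulaOne]\in\statesOf{\formulaTwo} \) iff \( \formulaOne\nmableit\formulaTwo \). Define \( s\pmR t \) iff \( s\neq t \) and \( t\in\statesOf{\formulaOne} \) for some \( \formulaOne \) with \( s=[\formulaOne] \).
\begin{itemize}
\item \emph{Asymmetry.} Suppose \( s\pmR t \) and \( t\pmR s \), with \( s=[\formulaOne] \) and \( t=[\formulaThree] \). Then \( \formulaThree\nmableit\formulaOne \) and \( \formulaOne\nmableit\formulaThree \), so \( s=t \) by the previous step, a contradiction.
\item \emph{Unique minimal state.} We have \( [\formulaOne]\in\statesOf{\formulaOne} \) by Ref. The same mutual-entailment argument shows that no \( s' \) with \( s'\in\statesOf{\formulaOne} \) satisfies \( s'\pmR[\formulaOne] \). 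Moreover, \( [\formulaOne]\pmR t \) for every other \( t\in\statesOf{\formulaOne} \).
\end{itemize}
Hence \( \minOf{\statesOf{\formulaOne}}{\pmR}=\{[\formulaOne]\} \), and smoothness follows. Finally, \( \minOf{\modelsOf{\formulaOne}}{\pmR}=\modelsOf{C(\formulaOne)}\subseteq\modelsOf{\formulaTwo} \) iff \( \formulaOne\nmableit\formulaTwo \).

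\textbf{Main obstacle.} The step I expect to need the most care is the completeness direction. Two points are delicate there. One is that the relation \( \pmR \) must be well defined on classes, which is why I define it through the existence of a suitable representative \( \formulaOne \). The other is that the derivations of And and of the consequence-equivalence must use only CM, Cut, RW, LLE and Ref, with no appeal to negation or material implication, since these are unavailable in \( \PDL \).
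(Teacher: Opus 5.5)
Your proof is correct, and its architecture matches the paper's. You prove the same three inclusions and do the same rule-by-rule soundness check, and you build the same canonical model. Your classes of formulas with equal consequence sets are exactly the $\sim$-classes, by Ref and your reciprocity step. Your $\pmR$ is the paper's relation built from $\preceq_{\nmableit}$ with an existentially chosen representative. Your label $\modelsOf{C(\formulaOne)}$ equals $\mathrm{Norm}(\formulaOne,\nmableit)$.

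The real difference is how you get a formula defining the normal teams, i.e.\ Theorem~\ref{thm:definablity}. The paper goes through Lemma~\ref{lem:closures}, whose part (c) uses Yang's expressive completeness of $\PDL$ for downward-closed team properties containing $\emptyset$ to produce $\theta_{\formulaOne}$. You instead derive And inside {\SystemC} by KLM's derivation, which uses only Ref, RW, CM, Cut and LLE. You then take $C(\formulaOne)$ to be the finite conjunction of all consequences modulo $\equiv$.

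Your route is more elementary and self-contained. It needs only that $\allTeams{N}$ is finite and that $\land$ is interpreted as intersection. It also makes explicit that $\Cableit(\formulaOne)$ is closed under conjunction, which the paper's step $\Cableit(\formulaOne)=\Th(M)$ tacitly requires: every $\Th(M)$ is closed under $\land$, yet the paper's proof of Lemma~\ref{lem:closures}(a) uses RW alone. Your reciprocity argument also proves the asymmetry of $\pmR$ and the minimality of $[\formulaOne]$ in $\statesOf{\formulaOne}$. The paper leaves both implicit, since it only rules out other minimal states. The paper's phrasing via $\mathrm{Norm}$ and expressive completeness instead isolates the semantic property it presents as the transferable ingredient for other negation-free team logics.

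One small slip: in the soundness part, your ``conversely'' sentence repeats the inclusion $\minOf{\statesOf{\formulaOne\land\formulaTwo}}{\pmR}\subseteq\minOf{\statesOf{\formulaOne}}{\pmR}$. The reverse inclusion is immediate under the premise $\formulaOne\nmableitC\formulaTwo$, from $\minOf{\statesOf{\formulaOne}}{\pmR}\subseteq\statesOf{\formulaOne\land\formulaTwo}\subseteq\statesOf{\formulaOne}$. That reverse inclusion is all Cut needs, and it requires no smoothness.
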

First, we like to remark that the original proof of Proposition~\ref{thm:CPLCPresepresention} makes heavy use of classical negation and material implication of the underlying propositional logic.
As neither classical negation nor material implication are available (and also not definable) in \( \PDL \), the proof of Theorem~\ref{thm:PDLCPresepresention} requires a new approach that circumvents the usage of implication and negation. We present the proof in the following, and mark those steps that are borrowed from KLM (\citeyear{KS_KrausLehmannMagidor1990}).

Our first observation is that entailment relations in $\PDLCum$ satisfy {\SystemC}. This is a novel result as the proof of {\SystemC} satisfaction for $\CPLPref$ given by KLM does not carry over to preferential propositional dependence logic.
\begin{proposition}
    \label{prop:systemCteams}
    \( \PDLCum \subseteq  \PDLSystemC \).
\end{proposition}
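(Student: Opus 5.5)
The plan is to work purely at the level of states and avoid any use of negation or implication. First I fix a cumulative model \( \pmC=\tuple{\pmS,\pmL,\pmR} \) for some \( \PDL_N \). The entailment is \( \formulaOne\nmableit_{\pmC}\formulaTwo \) iff \( \minOf{\modelsOf{\formulaOne}}{\pmR}\subseteq\modelsOf{\formulaTwo} \). Since \( \minOf{\modelsOf{\formulaOne}}{\pmR} \) is the union of the labels of the states in \( \minOf{\statesOf{\formulaOne}}{\pmR} \), and \( \pmL(s)\models\formulaTwo \) means that every team in \( \pmL(s) \) satisfies \( \formulaTwo \), I first record the following reformulation:
\[ \formulaOne\nmableit_{\pmC}\formulaTwo \iff \minOf{\statesOf{\formulaOne}}{\pmR}\subseteq\statesOf{\formulaTwo}. \]

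The second basic observation concerns conjunction. By the team semantics of \( \wedge \), a team satisfies \( \formulaOne\wedge\formulaTwo \) iff it satisfies both conjuncts. Applying this to every team in \( \pmL(s) \) gives \( \statesOf{\formulaOne\wedge\formulaTwo}=\statesOf{\formulaOne}\cap\statesOf{\formulaTwo} \). This is the only logical property of \( \PDL \) that the argument needs, and it holds even though \( \vee \) and negation behave non-classically.

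With these two facts, the easy rules follow directly.
\begin{itemize}
\item \eqref{pstl:Ref}: minimal elements of \( \statesOf{\formulaOne} \) lie in \( \statesOf{\formulaOne} \).
\item \eqref{pstl:LLE}: \( \formulaOne\equiv\formulaTwo \) means \( \modelsOf{\formulaOne}=\modelsOf{\formulaTwo} \), hence \( \statesOf{\formulaOne}=\statesOf{\formulaTwo} \).
\item \eqref{pstl:RW}: \( \formulaOne\models\formulaTwo \) gives \( \statesOf{\formulaOne}\subseteq\statesOf{\formulaTwo} \), and the inclusion \( \minOf{\statesOf{\formulaThree}}{\pmR}\subseteq\statesOf{\formulaOne} \) then transfers to \( \statesOf{\formulaTwo} \).
\end{itemize}

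The central step, which handles both \eqref{pstl:CM} and \eqref{pstl:Cut}, is a lemma: if \( \minOf{\statesOf{\formulaOne}}{\pmR}\subseteq\statesOf{\formulaTwo} \), then \( \minOf{\statesOf{\formulaOne\wedge\formulaTwo}}{\pmR}=\minOf{\statesOf{\formulaOne}}{\pmR} \). I prove the two inclusions separately.
\begin{itemize}
\item (\( \supseteq \)) Let \( s \) be minimal in \( \statesOf{\formulaOne} \). By hypothesis and the conjunction identity, \( s\in\statesOf{\formulaOne\wedge\formulaTwo} \). Since \( \statesOf{\formulaOne\wedge\formulaTwo}\subseteq\statesOf{\formulaOne} \), no state of the smaller set lies \( \pmR \)-below \( s \), so \( s \) is minimal there too.
\item (\( \subseteq \)) Let \( s \) be minimal in \( \statesOf{\formulaOne\wedge\formulaTwo} \), and suppose \( s \) is not minimal in \( \statesOf{\formulaOne} \). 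Smoothness of \( \statesOf{\formulaOne} \) (the cumulativity of \( \pmC \)) provides \( s'\in\minOf{\statesOf{\formulaOne}}{\pmR} \) with \( s'\pmR s \). By hypothesis \( s'\in\statesOf{\formulaTwo} \), so \( s'\in\statesOf{\formulaOne\wedge\formulaTwo} \), which contradicts the minimality of \( s \).
\end{itemize}

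Given the lemma, both remaining rules are immediate. For \eqref{pstl:CM}, from \( \formulaOne\nmableit\formulaTwo \) and \( \formulaOne\nmableit\formulaThree \) we get \( \minOf{\statesOf{\formulaOne\wedge\formulaTwo}}{\pmR}=\minOf{\statesOf{\formulaOne}}{\pmR}\subseteq\statesOf{\formulaThree} \). For \eqref{pstl:Cut}, the same equality read in the other direction turns \( \formulaOne\wedge\formulaTwo\nmableit\formulaThree \) into \( \formulaOne\nmableit\formulaThree \).

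I expect the \( \subseteq \) direction of the lemma to be the main obstacle. It is exactly where smoothness is indispensable, and where the KLM argument would instead reason with formulas such as \( \formulaOne\wedge\neg\formulaTwo \) or material implication, which \( \PDL \) does not provide. Working with sets of states rather than formulas circumvents this. One should also check that states with empty labels cause no problems: they satisfy every formula, lie in every \( \statesOf{\cdot} \), and are treated uniformly by the argument above.
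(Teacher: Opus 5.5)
Your proof is correct and is essentially the paper's argument. It uses the same state-level reading of $\nmableit_{\pmC}$, the identity $\statesOf{\formulaOne\land\formulaTwo}=\statesOf{\formulaOne}\cap\statesOf{\formulaTwo}$, and the same smoothness-based contradiction for \eqref{pstl:CM}. The only difference is packaging: you prove the equality $\minOf{\statesOf{\formulaOne\land\formulaTwo}}{\pmR}=\minOf{\statesOf{\formulaOne}}{\pmR}$ once as a lemma serving both \eqref{pstl:Cut} and \eqref{pstl:CM}, whereas the paper handles the two rules separately. For \eqref{pstl:Cut} the paper asserts that equality without proof, though only the easy inclusion is needed there.
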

\begin{proof}[Proof (idea)]
	One checks case by case that every postulate of {\SystemC} is satisfied.
    A full proof is provided in the accompanied supplementary material.
\end{proof}

In the remainder of this section, we will show that every entailment relation \( {\nmableit} \in \PDLSystemC \) is strongly cumulative. %
For that, we will first show that one can characterize all the consequences of a formula \( \formulaOne \) via \( \nmableit \) semantically.

For a given formula \( \formulaOne \), we denote the set of all teams that satisfy all consequences of \( \formulaOne \) with respect to \( \nmableit \), by
\begin{align*}
    \mathrm{Norm}(\formulaOne,{\nmableit}) =  \{\, X \mid X \models \formulaTwo \text{ for all } \formulaTwo \text{ with } \formulaOne \nmableit \formulaTwo \,\} \ .
\end{align*}
In the following, we will show that for every \( \formulaOne \) the set \( \mathrm{Norm}(\formulaOne,{\nmableit}) \)  characterize semantically the consequences of \( {\nmableit} \) drawn from \( \formulaOne \).
For all formulas \( \formulaOne \), all sets of formulas \( F \subseteq \ssFormulas \), and all sets of teams \( M \)  we define:
\begin{align*}
     \Cableit(\formulaOne) & = \{\, \formulaTwo \mid \formulaOne \nmableit \formulaTwo \,\}, \quad \Cn(\formulaOne) = \{ \,\formulaTwo \mid \formulaOne \models \formulaTwo \,\}, \\
     \Th(M) & = \{\, \formulaOne \in \ssFormulas \mid X \models \formulaOne  \text{ for all } X \in M \,\},\\
     \Cn(F) & = \{\, \formulaTwo \mid \formulaOne \models \formulaTwo \text{ for all } \formulaOne \in F \,\}.
\end{align*}
The following lemma provides basic insights into the closures \( \Cableit \) and \( \Cn\), and the theory operator \( \Th \).
Due to space constraints, the proof is given in the supplementary material.%
\begin{lemma} 
    \label{lem:closures}
    The following statements are true:
	\begin{enumerate}[(a)]
		\item 
		For \( {\nmableit} \in \PDLSystemC \) we have that \( \Cableit(\formulaOne) = \Cn(\Cableit(\formulaOne)) \).
		\item 
		For each formula \( \formulaOne \) there exists an \( M \subseteq \mathcal{P}(\mathbb{T}_{N}) \) such that \( \Cn(\formulaOne) = \Th(M) \) and \( M = \modelsOf{\formulaOne} \).
		\item 
		For each set of formulas \( F \) with \( F=\Cn(F) \) there exists a formula \( \formulaOne_{F} \) such that \( F = \Cn(\formulaOne_{F}) \) is true.
	\end{enumerate}
\end{lemma}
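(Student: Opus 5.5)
The three parts are proved separately. Parts (a) and (c) use that \( \PDL_N \) over a fixed finite \( N \) is expressively rich on downward closed team properties containing \( \emptyset \). Part (b) is essentially definitional.

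\emph{Part (b).} Take \( M = \modelsOf{\formulaOne} \). If \( \formulaOne \models \formulaTwo \), every \( X \in M \) satisfies \( \formulaTwo \), so \( \formulaTwo \in \Th(M) \). Conversely, if \( \formulaTwo \in \Th(M) \), then \( \modelsOf{\formulaOne} \subseteq \modelsOf{\formulaTwo} \), i.e.\ \( \formulaOne \models \formulaTwo \). This needs no property of \( \PDL \) beyond \( \models^t \) being model inclusion.

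\emph{Part (c).} The key fact I will use is the expressive completeness of \( \PDL_N \) (Yang--Väänänen): every downward closed \( \mathcal{K} \subseteq \mathbb{T}_N \) with \( \emptyset \in \mathcal{K} \) is \( \modelsOf{\chi} \) for some \( \chi \in \pdlN \). For example, take the disjunction over maximal \( X \in \mathcal{K} \) of a formula \( \Theta_X \) defining \( \{Y \mid Y \subseteq X\} \).

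Given \( F = \Cn(F) \), let \( M = \bigcap_{\formulaTwo\in F} \modelsOf{\formulaTwo} \). Since \( \mathbb{T}_N \) is finite and each \( \modelsOf{\formulaTwo} \) is downward closed and contains \( \emptyset \) (Proposition~\ref{prop:pdl_pincl_properties}), \( M \) is also downward closed and contains \( \emptyset \). Choose \( \formulaOne_F \) with \( \modelsOf{\formulaOne_F} = M \); equivalently, take a finite conjunction of representatives of the finitely many semantic classes occurring in \( F \).

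Then \( F \subseteq \Cn(\formulaOne_F) \) holds trivially. For the converse I read the closure condition \( F = \Cn(F) \) as closure under entailment from \( F \) as a whole, so that \( \formulaOne_F \), being a finite conjunction of members of \( F \), lies in \( F \), and hence its consequences do too. Because \( \Cn(F) \) is defined in the paper as the set of \( \formulaTwo \) entailed by each \( \formulaOne \in F \), I will make this reading explicit and use closure under finite conjunction. This interpretive point is the main obstacle.

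\emph{Part (a).} The inclusion \( \Cableit(\formulaOne) \subseteq \Cn(\Cableit(\formulaOne)) \) is immediate (every formula entails itself, under the intended reading). For the other inclusion, suppose \( \formulaTwo \) is entailed by \( \Cableit(\formulaOne) \). By finiteness up to equivalence, \( \Cableit(\formulaOne) \) has finitely many semantic classes, with representatives \( \formulaTwo_1,\dots,\formulaTwo_k \).

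\begin{enumerate}[(i)]
    \item Closure of \( \Cableit(\formulaOne) \) under \( \wedge \) follows from System~C via the derived rule And: from \( \formulaOne\nmableit\formulaTwo_1 \) and \( \formulaOne\nmableit\formulaTwo_2 \), (CM) gives \( \formulaOne\wedge\formulaTwo_1\nmableit\formulaTwo_2 \). Then (RW) gives \( \formulaOne\wedge\formulaTwo_1\nmableit\formulaTwo_1\wedge\formulaTwo_2 \), and (Cut) with \( \formulaOne\nmableit\formulaTwo_1 \) yields \( \formulaOne\nmableit\formulaTwo_1\wedge\formulaTwo_2 \).
    \item Hence \( \formulaOne \nmableit \bigwedge_i \formulaTwo_i \).
    \item Since \( \bigwedge_i\formulaTwo_i \models \formulaTwo \), (RW) gives \( \formulaOne\nmableit\formulaTwo \).
\end{enumerate}

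The And derivation uses only conjunction, so it avoids negation and implication entirely, which is what the setting requires.
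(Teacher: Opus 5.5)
\textbf{Gap in part (a), step (i).} This step misapplies (RW). From $\varphi\wedge\psi_1\nmableit\psi_2$, the rule (RW) only yields $\varphi\wedge\psi_1\nmableit\chi$ for $\chi$ with $\psi_2\models\chi$. It is not relativised to the antecedent, and $\psi_2\models\psi_1\wedge\psi_2$ fails in general (take $\psi_1=p$, $\psi_2=q$). The derived rule And is still available in System~C, via the standard KLM derivation, which also needs (Ref) and a second (Cut):
\begin{enumerate}[(1)]
\item (Ref) gives $(\varphi\wedge\psi_1)\wedge\psi_2\nmableit(\varphi\wedge\psi_1)\wedge\psi_2$.
\item (RW) gives $(\varphi\wedge\psi_1)\wedge\psi_2\nmableit\psi_1\wedge\psi_2$, since team-semantic $\wedge$ is intersection of model sets.
\item (Cut) with your (CM)-instance $\varphi\wedge\psi_1\nmableit\psi_2$ yields $\varphi\wedge\psi_1\nmableit\psi_1\wedge\psi_2$.
\item Your final (Cut) with $\varphi\nmableit\psi_1$ then works.
\end{enumerate}
This still uses only $\wedge$, so your remark about avoiding negation and implication survives, and with this repair (ii)--(iii) go through.

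\textbf{Comparison with the paper.} Once repaired, your argument for (a) takes a different and more complete route than the paper's. The paper proves (a) by a single (RW) step. It assumes that any $\gamma\in\Cn(\Cableit(\varphi))\setminus\Cableit(\varphi)$ is entailed by one $\psi\in\Cableit(\varphi)$. That is automatic only if $\Cn(F)$ means \enquote{entailed by some member of $F$}, and under that reading (c) fails. Take distinct variables $p,q$ and $F=\Cn(p)\cup\Cn(q)$: this $F$ is closed, but any $\varphi_F$ with $\Cn(\varphi_F)=F$ lies in $F$ and entails both $p$ and $q$, forcing $p\models q$ or $q\models p$. Under the reading that (c) actually needs, and that you adopt (entailment by $F$ as a whole), one must first collapse $\Cableit(\varphi)$ to a single formula. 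Your finiteness-plus-And argument supplies exactly this step.

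Part (b) matches the paper, which omits the proof as immediate. For (c), the paper writes $\modelsOf{F}^t$ as a finite union of downward-closed sets and combines their defining formulas with the Boolean disjunction that Yang shows definable. Your observation that $\modelsOf{F}^t$ is itself downward closed and contains $\emptyset$ is more direct. Your finite-conjunction alternative goes further: it shows that finiteness of $\mathbb{T}_N$ alone suffices, with no appeal to expressive completeness. If you keep the expressive-completeness illustration, it needs that Boolean disjunction. The paper's split $\vee$ applied to the $\Theta_X$ would define the downset of the union of the maximal teams, not $\mathcal{K}$.
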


 Note that the proof of Lemma~\ref{lem:closures}  relies on the  fact that $\pdl$ is expressively complete for team properties that are downward closed and contain the empty team \cite{Yang_dissertation}. Now, we show the central insight that \( \mathrm{Norm}(\formulaOne,{\nmableit}) \) characterizes all consequences of \( \formulaOne \) by \( {\nmableit} \).
\begin{theorem}[Definability of Consequences]
    \label{thm:definablity}
For all entailment relations \( {\nmableit} \in \PDLSystemC \)  and all formulas \( \formulaOne \) we have that:
\begin{enumerate}[(a)]
	\item For all formulas \( \formulaTwo \) we have that:\\
		\(\formulaOne \nmableit \formulaTwo\) 
		if and only if \(\mathrm{Norm}(\formulaOne,{\nmableit}) \models \formulaTwo\).
    \item There is a \( \theta_{\formulaOne} \) such that for all formulas \( \formulaTwo \) we have that:\\
    \(\formulaOne \nmableit \formulaTwo\) if and only if \(\theta_{\formulaOne} \models \formulaTwo\).
\end{enumerate}
\end{theorem}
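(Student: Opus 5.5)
We prove (b) first and derive (a) from it. Fix ${\nmableit} \in \PDLSystemC$ and a formula $\formulaOne$. By Lemma~\ref{lem:closures}(a), the set $F = \Cableit(\formulaOne)$ satisfies $F = \Cn(F)$. Lemma~\ref{lem:closures}(c) then gives a formula $\theta_{\formulaOne} := \formulaOne_F$ with $\Cableit(\formulaOne) = \Cn(\theta_{\formulaOne})$. Unfolding the definitions, for every $\formulaTwo$ we get
\[
\formulaOne \nmableit \formulaTwo \iff \formulaTwo \in \Cableit(\formulaOne) \iff \formulaTwo \in \Cn(\theta_{\formulaOne}) \iff \theta_{\formulaOne} \models \formulaTwo ,
\]
which is (b).

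For (a), the plan is to show $\mathrm{Norm}(\formulaOne,{\nmableit}) = \modelsOf{\theta_{\formulaOne}}$. Both inclusions use the characterization from (b).
\begin{itemize}
\item If $X \models \theta_{\formulaOne}$, then for every $\formulaTwo$ with $\formulaOne \nmableit \formulaTwo$ we have $\theta_{\formulaOne} \models \formulaTwo$ by (b). Hence $X \models \formulaTwo$, and so $X \in \mathrm{Norm}(\formulaOne,{\nmableit})$.
\item Conversely, $\theta_{\formulaOne} \models \theta_{\formulaOne}$ trivially, so by (b) we have $\formulaOne \nmableit \theta_{\formulaOne}$. Every team in $\mathrm{Norm}(\formulaOne,{\nmableit})$ therefore satisfies $\theta_{\formulaOne}$.
\end{itemize}

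With this equality, $\mathrm{Norm}(\formulaOne,{\nmableit}) \models \formulaTwo$ holds iff every model of $\theta_{\formulaOne}$ satisfies $\formulaTwo$. That is exactly $\theta_{\formulaOne} \models^t \formulaTwo$, which by (b) is equivalent to $\formulaOne \nmableit \formulaTwo$. This gives (a).

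The main obstacle lies in the earlier lemma rather than in this argument, specifically in Lemma~\ref{lem:closures}(c). One must find a single defining formula for an arbitrary deductively closed (possibly infinite) set $F$. Over a fixed finite $N$, the candidate is $\bigcap_{\formulaTwo\in F}\modelsOf{\formulaTwo}$. This is an intersection of downward closed team properties containing $\emptyset$, so it is again such a property. It is also finite, since $\mathbb{T}_N$ is finite, so expressive completeness of $\pdl$ yields $\formulaOne_F$.

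Within the present proof, the only delicate point is that Lemma~\ref{lem:closures}(a) really needs closure of $\Cableit(\formulaOne)$ under semantic consequence. This follows from \eqref{pstl:RW}, together with closure under conjunction: \eqref{pstl:CM} followed by \eqref{pstl:Cut} gives $\formulaOne\nmableit\formulaTwo\wedge\formulaThree$, and \eqref{pstl:Ref} ensures the set is non-empty. Since $N$ is finite, only finitely many formulas are pairwise inequivalent, so a finite conjunction suffices. We take this as given from the lemma.
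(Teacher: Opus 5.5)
Your proof is correct. It uses the same ingredients as the paper, Lemma~\ref{lem:closures}(a) and (c), but in the opposite order.

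The paper \emph{chooses} $\theta_{\formulaOne}$ as a formula with $\modelsOf{\theta_{\formulaOne}}=\mathrm{Norm}(\formulaOne,{\nmableit})$. It gets $\formulaOne\nmableit\formulaTwo \Rightarrow \mathrm{Norm}(\formulaOne,{\nmableit})\models\formulaTwo$ from the definition of $\mathrm{Norm}$. For the converse it chains Lemma~\ref{lem:closures}(a)--(c) to obtain a set of teams $M$ with $\Cableit(\formulaOne)=\Th(M)$ and shows $M\subseteq\mathrm{Norm}(\formulaOne,{\nmableit})$. It then concludes via $\Th(\mathrm{Norm}(\formulaOne,{\nmableit}))\subseteq\Th(M)$, so that (b) is just (a) restated.

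You instead take $\theta_{\formulaOne}=\formulaOne_F$ for $F=\Cableit(\formulaOne)$, which makes (b) immediate, and then \emph{prove} $\modelsOf{\theta_{\formulaOne}}=\mathrm{Norm}(\formulaOne,{\nmableit})$. Your first inclusion is exactly the paper's step $M\subseteq\mathrm{Norm}(\formulaOne,{\nmableit})$ with $M=\modelsOf{\formulaOne_F}$. Your second inclusion, via the observation $\formulaOne\nmableit\theta_{\formulaOne}$, replaces the paper's bare assertion that $\mathrm{Norm}(\formulaOne,{\nmableit})$ is definable.

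What your route buys is a shorter, linear argument. It needs neither Lemma~\ref{lem:closures}(b) nor the $\Th(M)$ detour, and the defining formula of the normal teams comes out as a conclusion rather than a choice. The underlying semantic fact is the same in both routes. As your sketch of (c) shows, $\formulaOne_F$ comes from applying expressive completeness to $\bigcap_{\formulaTwo\in F}\modelsOf{\formulaTwo}$, and for $F=\Cableit(\formulaOne)$ this intersection is exactly $\mathrm{Norm}(\formulaOne,{\nmableit})$.

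Your side remark on Lemma~\ref{lem:closures}(a) is also well taken and goes beyond the paper. The paper's proof of Lemma~\ref{lem:RWclosure} invokes only \eqref{pstl:RW}, which handles consequences of a single member of $\Cableit(\formulaOne)$. Part (c) needs closure under consequence of $\Cableit(\formulaOne)$ taken jointly, and that requires closure under conjunction plus finiteness of $N$.

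Your derivation of conjunction closure is compressed, though; written out, it runs as follows.
\begin{enumerate}[(i)]
\item From $\formulaOne\nmableit\formulaTwo$ and $\formulaOne\nmableit\formulaThree$, \eqref{pstl:CM} gives $\formulaOne\wedge\formulaTwo\nmableit\formulaThree$.
\item \eqref{pstl:Ref} and \eqref{pstl:RW} give $(\formulaOne\wedge\formulaTwo)\wedge\formulaThree\nmableit\formulaTwo\wedge\formulaThree$.
\item \eqref{pstl:Cut} then yields $\formulaOne\wedge\formulaTwo\nmableit\formulaTwo\wedge\formulaThree$.
\item A second \eqref{pstl:Cut}, with $\formulaOne\nmableit\formulaTwo$, yields $\formulaOne\nmableit\formulaTwo\wedge\formulaThree$.
\end{enumerate}
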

\begin{proof}
    Let \( {\nmableit} \) and \( \formulaOne \) be as above.
    We let \( \theta_{\formulaOne} \) be a formula such that \( \modelsOf{\theta_{\formulaOne}} = \mathrm{Norm}(\formulaOne,{\nmableit})  \) which exists due to Lemma~\ref{lem:closures}.

\smallskip
\noindent[Case \enquote*{\( \formulaOne \nmableit \formulaTwo \Rightarrow  \mathrm{Norm}(\formulaOne,{\nmableit}) \models \formulaTwo    \)}] If \( \formulaOne \nmableit \formulaTwo \), then we obtain \( \theta_{\formulaOne} \models \formulaTwo   \) by definition of \( \mathrm{Norm}(\formulaOne,{\nmableit}) \).
\smallskip

\noindent[Case \enquote*{\( \theta_{\formulaOne} \models \formulaTwo  \Rightarrow \formulaOne \nmableit \formulaTwo \)}]
By Lemma~\ref{lem:closures}, there exist \( M \subseteq \mathcal{P}(\mathbb{T}_{N}) \) such that \( \mathop{C}_{{\nmableit}}(\formulaOne) = \Th(M) \) is true.

We show that \( M \subseteq \mathrm{Norm}(\formulaOne,{\nmableit}) \) is true.
Suppose not, i.e., there is an \( X \in M \) such that  \( X \notin \mathrm{Norm}(\formulaOne,{\nmableit}) \). From the latter, we obtain that there is a formula \( \formulaThree \) with \( X \not\models \formulaThree \) and \( \formulaOne \nmableit \formulaThree \).
Now, from \( \mathop{C}_{{\nmableit}}(\formulaOne) = \Th(M) \), we can deduce \( X \models \delta \) for all \( \delta \in \mathop{C}_{{\nmableit}}(\formulaOne) \). Consequently, \( X \models \formulaThree \), as we have that \( \formulaThree \in \mathop{C}_{{\nmableit}}(\formulaOne) \). This shows \( M \subseteq \mathrm{Norm}(\formulaOne,{\nmableit}) \).

We then obtain that \( \Th(\mathrm{Norm}(\formulaOne,{\nmableit})) \subseteq \Th(M)  \) by employing \( M \subseteq \mathrm{Norm}(\formulaOne,{\nmableit}) \).
Consequently, from \( \mathrm{Norm}(\formulaOne,{\nmableit}) \models \formulaTwo \),  we deduce that \( M \models \formulaTwo \).
Thus, by employing \( \modelsOf{\theta_{\formulaOne}} = \mathrm{Norm}(\formulaOne,{\nmableit}) \) and \( \mathop{C}_{{\nmableit}}(\formulaOne) = \Th(M) \), from which we have that \( \theta_{\formulaOne} \models \formulaTwo \) implies \( \formulaOne \nmableit \formulaTwo \). \qedhere
\end{proof}

Next, we construct a cumulative model that captures \( \nmableit \). The construction is inspired by the construction of \ksCiteAY{KS_KrausLehmannMagidor1990}. 
Let \( {\sim} \) be the equivalence relation, respectively \( \preceq_{\nmableit} \) be the relation, defined by 
\begin{align*}
    \formulaOne \sim \formulaTwo \quad& \text{ if } \formulaOne \nmableit \formulaTwo \text{ and } \formulaTwo \nmableit \formulaOne,\\
  [\formulaOne]_{{\sim}} \preceq_{\nmableit} [\formulaTwo]_{{\sim}} \quad& \text{ if there exists }  \formulaThree \in [\formulaOne]_{{\sim}} \text{ with } \formulaTwo \nmableit \formulaThree \ .
\end{align*}
Define \( \pmC_{\nmableit}=\tuple{\pmS,\pmL,{\pmR}} \), with $\pmS \coloneqq \pdl / {\sim}$ as follows:
\begin{align*}
  \pmL &\coloneqq \{\, ([\formulaOne]_{{\sim}}, \mathrm{Norm}(\formulaOne,{\nmableit})) \mid \formulaOne \in \pdl  \,\} \ , \\
  {\pmR} &\coloneqq \{ \, ([\formulaOne]_{{\sim}},[\formulaTwo]_{{\sim}}) \mid  [\formulaOne]_{{\sim}} \neq [\formulaTwo]_{{\sim}} \text{ and } [\formulaOne]_{{\sim}} \preceq_{\nmableit} [\formulaTwo]_{{\sim}} \,\} \ .
\end{align*}
The states of \( \pmC_{\nmableit} \) are the equivalence classes of \( {\sim} \). The function \( \pmL \) assign to every state \( [\formulaOne]_{{\sim}} \) the normal worlds of \( \formulaOne \), and we have that \( [\formulaOne]_{{\sim}} \pmR [\formulaTwo]_{{\sim}} \) if \( [\formulaOne]_{{\sim}} \) and \(  [\formulaOne]_{{\sim}} \) are different and \(  [\formulaOne]_{{\sim}} \preceq_{\nmableit} [\formulaTwo]_{{\sim}} \) is true. 

\begin{proposition}
    \( \PDLSystemC \subseteq \PDLStrongCum. \)
    \label{prop:PDLRepSystemCconstruction}
\end{proposition}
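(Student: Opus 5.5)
The plan is to verify directly that the model \( \pmC_{\nmableit} \) constructed above is a strong cumulative model and that \( {\nmableit_{\pmC_{\nmableit}}} = {\nmableit} \). Everything rests on one auxiliary fact about System~C, which I would prove first.

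\emph{Reciprocity.} If \( \formulaOne \nmableit \formulaTwo \) and \( \formulaTwo \nmableit \formulaOne \), then \( \Cableit(\formulaOne)=\Cableit(\formulaTwo) \). This is the standard KLM derivation, which uses no negation or implication and so carries over verbatim:
\begin{itemize}
\item from \( \formulaOne\nmableit\formulaThree \) and \( \formulaOne\nmableit\formulaTwo \), \eqref{pstl:CM} gives \( \formulaOne\land\formulaTwo\nmableit\formulaThree \);
\item \eqref{pstl:LLE} turns this into \( \formulaTwo\land\formulaOne\nmableit\formulaThree \);
\item \eqref{pstl:Cut} with \( \formulaTwo\nmableit\formulaOne \) then yields \( \formulaTwo\nmableit\formulaThree \).
\end{itemize}
Reciprocity has three consequences. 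First, \( \sim \) is indeed an equivalence relation, with reflexivity by \eqref{pstl:Ref} and transitivity via reciprocity. Second, \( \mathrm{Norm}(\formulaOne,{\nmableit}) \) depends only on \( [\formulaOne]_{\sim} \), so \( \pmL \) is a well-defined function. Third, \( \formulaTwo \nmableit \formulaThree \) for some \( \formulaThree\in[\formulaOne]_\sim \) can be exploited through the common consequence set of the class.

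\emph{The states satisfying a formula.} Next I identify \( \statesOf{\formulaTwo} \). By definition, \( [\formulaOne]_\sim \in \statesOf{\formulaTwo} \) iff \( \mathrm{Norm}(\formulaOne,{\nmableit}) \models \formulaTwo \). By Theorem~\ref{thm:definablity}(a) this holds iff \( \formulaOne \nmableit \formulaTwo \). By \eqref{pstl:Ref}, \( [\formulaTwo]_\sim \in \statesOf{\formulaTwo} \).

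\emph{\( [\formulaTwo]_\sim \) lies below every other state of \( \statesOf{\formulaTwo} \).} For any \( [\formulaOne]_\sim\in\statesOf{\formulaTwo} \) we have \( \formulaOne\nmableit\formulaTwo \), with \( \formulaTwo\in[\formulaTwo]_\sim \). Hence \( [\formulaTwo]_\sim \preceq_{\nmableit} [\formulaOne]_\sim \), and therefore \( [\formulaTwo]_\sim \pmR [\formulaOne]_\sim \) whenever the two classes differ.

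\emph{\( [\formulaTwo]_\sim \) is minimal.} Suppose \( [\formulaOne]_\sim \pmR [\formulaTwo]_\sim \) with \( [\formulaOne]_\sim \in \statesOf{\formulaTwo} \). Then \( \formulaOne\nmableit\formulaTwo \), and \( \formulaTwo\nmableit\formulaThree \) for some \( \formulaThree\sim\formulaOne \). By reciprocity \( \formulaThree \) has the same consequences as \( \formulaOne \), so \( \formulaThree\nmableit\formulaTwo \). Together with \( \formulaTwo\nmableit\formulaThree \) this gives \( \formulaTwo\sim\formulaThree\sim\formulaOne \), contradicting \( [\formulaOne]_\sim\neq[\formulaTwo]_\sim \).

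Hence \( \minOf{\statesOf{\formulaTwo}}{\pmR}=\{[\formulaTwo]_\sim\} \), since every other state is dominated by \( [\formulaTwo]_\sim \). This gives both smoothness and the uniqueness of the minimum.

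\emph{Asymmetry of \( \pmR \).} This is the step I expect to require the most care, since \( \preceq_{\nmableit} \) is defined through arbitrary representatives and there is no transitivity rule in System~C. Assume \( [\formulaOne]_\sim\pmR[\formulaTwo]_\sim \) and \( [\formulaTwo]_\sim\pmR[\formulaOne]_\sim \). Then \( \formulaTwo\nmableit\formulaThree \) with \( \formulaThree\sim\formulaOne \), and \( \formulaOne\nmableit\delta \) with \( \delta\sim\formulaTwo \). Applying reciprocity to the class of \( \formulaTwo \) gives \( \delta\nmableit\formulaThree \). Applying it to the class of \( \formulaOne \) gives \( \formulaThree\nmableit\delta \). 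So \( \formulaThree\sim\delta \), and hence \( [\formulaOne]_\sim=[\formulaTwo]_\sim \), contradicting the definition of \( \pmR \).

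\emph{Conclusion.} Finally, \( \minOf{\modelsOf{\formulaTwo}}{\pmR}=\pmL([\formulaTwo]_\sim)=\mathrm{Norm}(\formulaTwo,{\nmableit}) \). Theorem~\ref{thm:definablity}(a) then gives \( \formulaTwo\nmableit_{\pmC_{\nmableit}}\chi \) iff \( \mathrm{Norm}(\formulaTwo,{\nmableit})\models\chi \) iff \( \formulaTwo\nmableit\chi \). This proves the proposition.
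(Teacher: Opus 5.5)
Your proof is correct and follows the paper's route: the same model \( \pmC_{\nmableit} \), the same use of Theorem~\ref{thm:definablity} to identify \( \statesOf{\formulaTwo} \) and to show that \( [\formulaTwo]_{\sim} \) lies \( \pmR \)-below every other state of \( \statesOf{\formulaTwo} \), and the same final comparison via \( \mathrm{Norm} \). Your reciprocity lemma also supplies steps the paper only asserts or leaves implicit: that \( \sim \) is an equivalence and \( \pmL \) is well defined, that \( \pmR \) is asymmetric, and that \( [\formulaTwo]_{\sim} \) is itself minimal (the paper only rules out other minimal states), which is what actually gives smoothness.
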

\begin{proof}
First, we show that \( \pmC_{\nmableit} \) is a strong cumulative model.
One sees easily that \( \pmC_{\nmableit} \) is a relational model and \( {\pmR} \) is asymmetric.
It suffices to show that for every formula \( \formulaOne \) the state \( [\formulaOne]_{{\sim}} \) is the unique element of \( \minOf{\statesOf{\formulaOne}}{\pmR} \).
Suppose not, i.e., there is a state \( [\formulaTwo]_{{\sim}} \in \minOf{\statesOf{\formulaOne}}{\pmR}  \)  different from \( [\formulaOne]_{{\sim}} \).
By definition, we have \( \pmL([\formulaTwo]_{{\sim}}) = \mathrm{Norm}(\formulaTwo,{\nmableit}) \) and hence, that  \( \mathrm{Norm}(\formulaTwo,{\nmableit}) \subseteq \modelsOf{\formulaOne}^t \) is true. 
We obtain from the latter that \( \formulaTwo \nmableit \formulaOne \) is true by employing Theorem~\ref{thm:definablity}.
From \( \formulaOne \in [\formulaOne]_{{\sim}} \) and \( \formulaTwo \nmableit \formulaOne \), we obtain \( [\formulaOne]_{{\sim}} \preceq_{\nmableit} [\formulaTwo]_{{\sim}} \).
Consequently, we have that \( [\formulaOne]_{{\sim}} \pmR [\formulaTwo]_{{\sim}} \), because of \( [\formulaOne]_{{\sim}} \preceq_{\nmableit} [\formulaTwo]_{{\sim}} \) and \( [\formulaOne]_{{\sim}} \neq [\formulaTwo]_{{\sim}} \).
However, \( [\formulaOne]_{{\sim}} \pmR [\formulaTwo]_{{\sim}} \) is a contradiction to \( [\formulaTwo]_{{\sim}} \in \minOf{\statesOf{\formulaOne}}{\pmR}  \).

To complete the proof, we show that \( \nmableit \) and \( \nmableit_{\pmC_{\nmableit}} \) coincide.
Due to Theorem~\ref{thm:definablity}, we have that \( \formulaOne \nmableit \formulaTwo \)
if and only if \( \mathrm{Norm}(\formulaOne,{\nmableit}) \models \formulaTwo \).
By construction of \( \pmC_{\nmableit} \), it is true that \( \minOf{\modelsOf{\formulaOne}^t}{R} = \mathrm{Norm}(\formulaOne,{\nmableit}) \).
By definition, we get \( \formulaOne \nmableit_{\pmC_{\nmableit}} \formulaTwo \) exactly when  \( \mathrm{Norm}(\formulaOne,{\nmableit}) \models \formulaTwo \).
Consequently, we have that \( \formulaOne \nmableit \formulaTwo \) if and only if \( \formulaOne \nmableit_{\pmC_{\nmableit}} \formulaTwo \).
\end{proof}

Because  \( \PDLStrongCum \subseteq \PDLCum \) holds, we obtain Theorem~\ref{thm:PDLCPresepresention} from Proposition~\ref{prop:PDLRepSystemCconstruction} and Proposition~\ref{prop:systemCteams}.

\subsection{\!\!\!Propositional Logic with Team-based Semantics}
We show that classes of {\SystemC}-based entailment relations and cumulative entailment relations coincide with cumulative entailment relations for propositional logic with classical semantics and the following kind of cumulative models. 

\begin{definition}
	A relational model \( \pmC=\tuple{\pmS,\pmL,\pmR} \) for a logic \( \ssLogic \) is called 
	\emph{asymmetric} if \( \pmC \) is cumulative, $\pmP$ is asymmetric and for each state $s\in S$, the set $\pmL(s)$ is a singleton.
\end{definition}
With \( \TPLAsym \) we denote the class of entailment relations based on asymmetric models over \( \TPL \).

\begin{theorem} 
        \(\TPLAsym = \CPL^\logicFont{x} = \CPL^\logicFont{y}  = \TPL^\logicFont{x} = \TPL^\logicFont{y}\)
    is true for all $\logicFont{x},\logicFont{y}\in\{\, \logicFont{cuml},\  \logicFont{cuml[str]},\ \logicFont{c} \,\}$.
\end{theorem}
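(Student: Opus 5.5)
The plan is to route everything through the classical result, Proposition~\ref{thm:CPLCPresepresention}. I would prove a closed cycle of inclusions, for each fixed $N$:
\[ \TPLSystemC \subseteq \CPLStrongCum \subseteq \TPLAsym \subseteq \TPLCum \subseteq \TPLSystemC , \]
and in parallel $\CPLStrongCum \subseteq \TPLStrongCum \subseteq \TPLCum$. Together with $\CPLCum = \CPLStrongCum = \pmLogicX{\CPL}{\logicFont{c}}$, this yields all the claimed equalities.

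\textbf{Step 1: $\TPLSystemC = \CPLSystemC$.} The logics $\TPL_N$ and $\CPL_N$ have the same formulas $\PL(N)$. By flatness (Proposition~\ref{prop:tpl_pincl_properties}), $\models^t$ and $\models^c$ coincide, and hence so do the equivalences $\equiv^t$ and $\equiv^c$. The rules \eqref{pstl:LLE}, \eqref{pstl:RW}, \eqref{pstl:CM}, \eqref{pstl:Cut} and \eqref{pstl:Ref} refer only to $\models$, $\equiv$ and $\land$. Therefore a relation on $\PL(N)\times\PL(N)$ is closed under {\SystemC} over $\TPL_N$ iff it is closed under {\SystemC} over $\CPL_N$. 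Proposition~\ref{thm:CPLCPresepresention} then gives $\TPLSystemC = \CPLSystemC = \CPLStrongCum = \CPLCum$.

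\textbf{Step 2: translating strong classical models.} Take a strong cumulative model $\pmC=\tuple{\pmS,\pmL,\pmR}$ for $\CPL_N$, so that $\pmL(s)\subseteq\allModels{N}$. Each $\pmL(s)$ is itself a team over $N$. Define $\pmC^t=\tuple{\pmS,\pmL^t,\pmR}$ with $\pmL^t(s)=\{\pmL(s)\}$.

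By flatness, $\pmL(s)\models^t\formulaOne$ iff every $v\in\pmL(s)$ satisfies $\formulaOne$ classically. So $\statesOf{\formulaOne}$ is literally the same set of states in both models. Consequently smoothness, asymmetry of $\pmR$, and uniqueness of minimal states all transfer. Moreover, $\pmL^t(s)$ is a singleton, so $\pmC^t$ is simultaneously asymmetric and strong.

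For entailment, $\formulaOne\nmableit_{\pmC^t}\formulaTwo$ holds iff every minimal team $\pmL(s)$ satisfies $\formulaTwo$ in team semantics. By flatness this holds iff every valuation in $\bigcup\{\pmL(s)\mid s\in\minOf{\statesOf{\formulaOne}}{\pmR}\}$ satisfies $\formulaTwo$, i.e.\ iff $\formulaOne\nmableit_{\pmC}\formulaTwo$. This shows $\CPLStrongCum\subseteq\TPLAsym$ and $\CPLStrongCum\subseteq\TPLStrongCum$.

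\textbf{Step 3: soundness over $\TPL$.} The inclusions $\TPLAsym\subseteq\TPLCum$ and $\TPLStrongCum\subseteq\TPLCum$ hold by definition. For $\TPLCum\subseteq\TPLSystemC$ I would redo the argument of Proposition~\ref{prop:systemCteams}. It only uses the following facts:
\begin{itemize}
\item $\statesOf{\formulaOne\land\formulaTwo}=\statesOf{\formulaOne}\cap\statesOf{\formulaTwo}$, which holds because $M\models\formulaOne$ is defined pointwise over $M$ and $\land$ is intersection in team semantics;
\item $\formulaOne\equiv\formulaTwo$ implies $\statesOf{\formulaOne}=\statesOf{\formulaTwo}$;
\item smoothness of every $\statesOf{\formulaOne}$.
\end{itemize}
With these, the standard verifications go through. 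For \eqref{pstl:CM}, minimal states of $\formulaOne$ that satisfy $\formulaTwo$ remain minimal in $\statesOf{\formulaOne\land\formulaTwo}$, and smoothness shows there are no others. For \eqref{pstl:Cut}, the minimal states of $\formulaOne$ satisfy $\formulaTwo$ and are therefore minimal for $\formulaOne\land\formulaTwo$.

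\textbf{Main obstacle.} The delicate point is the soundness step, specifically \eqref{pstl:CM}. One must show that every minimal state of $\statesOf{\formulaOne\land\formulaTwo}$ is minimal in $\statesOf{\formulaOne}$. This requires smoothness of $\statesOf{\formulaOne}$ together with the fact that all minimal $\formulaOne$-states satisfy $\formulaTwo$. Care is also needed because $\pmR$ is not assumed transitive.

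If that verification causes trouble, there is a fallback: directly replay the construction of Proposition~\ref{prop:PDLRepSystemCconstruction} for $\TPL$. There the analogue of Lemma~\ref{lem:closures} holds because, by flatness, $\mathrm{Norm}(\formulaOne,{\nmableit})$ is the powerset of a set of valuations and hence $\PL$-definable. This route, however, is only a second path to $\TPLSystemC\subseteq\TPLStrongCum$, which Steps 1–2 already supply. The inclusion $\TPLCum\subseteq\TPLSystemC$ still has to be established by the soundness argument above.
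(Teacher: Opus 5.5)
Your proposal is correct and follows essentially the same route as the paper. Both use Proposition~\ref{thm:CPLCPresepresention}, the identity \( \TPLSystemC = \CPLSystemC \) via flatness, the reinterpretation \( \pmL'(s)=\{\pmL(s)\} \) of strong cumulative \( \CPL \)-models as strong asymmetric \( \TPL \)-models, and the \( \TPL \)-analogue of Proposition~\ref{prop:systemCteams}. The only difference is that you close the cycle with the definitional inclusion \( \TPLAsym \subseteq \TPLCum \) followed by soundness, whereas the paper adds a flattening argument for \( \TPLAsym \subseteq \CPLCum \), which is redundant given its other inclusions.
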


\begin{proof}
Note first that Proposition~\ref{thm:CPLCPresepresention} implies that \(  \CPLSystemC =  \CPLCum=\CPLStrongCum \). Furthermore, we have that \( \TPLStrongCum \subseteq \TPLCum \) and \(\, \TPLSystemC = \CPLSystemC \,\) are immediate consequences of the definitions. %
Finally, the analogue of Prop.~\ref{prop:systemCteams}   can be used to show that \( \TPLCum \,{\subseteq}\, \TPLSystemC \) is true.

Let us then show that \(\, \CPLStrongCum \,{\subseteq}\, \TPLStrongCum \,\).
Let \( \nmableitC \in \CPLStrongCum \) and \( \pmC = \tuple{\pmS,\pmL,\pmR} \) be the respective strong cumulative model for \( \CPL \). We reinterpret \(\pmC\) as a strong cumulative model \( \pmC' = \tuple{\pmS',\pmL',\pmR'} \) for \( \TPL \) with \( \pmS' = \pmS, \), \( \pmL' = \{\, (s,\{ \pmL(s)\}) \mid s \in \pmS \,\}\), \(  \pmR'  = {\pmR}. \)
Because in \( \TPL \) is holds that ($\star$) \(  \{\nu\} \models \formulaOne  \) if and only if \( \nu \models^{c} \formulaOne \) for all singleton teams \( \{\nu\} \), we obtain then \( {\nmableitC} = {\nmableit_{\pmC'}} \). 
Next, we show that $\CPLStrongCum \subseteq \TPLAsym $ is true. 
Note that any strong cumulative model \( \pmC = \tuple{\pmS,\pmL,\pmR} \) for \( \CPL \) can be reinterpreted ($\pmL(s)=\{\,\nu_1,\ldots,\nu_m\,\}$ becomes $\pmL(s)=\{\,\{\nu_1,\ldots,\nu_m\}\,\}$) as an asymmetric model for \( \TPL \), hence by the flatness property of \( \TPL \)-formulas and ($\star$) above the claim follows.
Finally, we show that \( \TPLAsym \subseteq \CPLCum \) holds.
Note that any asymmetric model \( \pmC = \tuple{\pmS,\pmL,\pmR} \) for \( \TPL \) 
can be interpreted ($\pmL(s)=\{\,\{\nu_1,\ldots,\nu_m\}\,\}$ becomes  $\pmL(s)=\{\,\nu_1,\ldots,\nu_m\,\}$) as a cumulative model $\pmC$ for \( \CPL \). By the flatness property of \( \TPL \)-formulas and ($\star$) above it follows that  \( {\nmableitC} = {\nmableit_{\pmC'}} \). 
\qedhere
\end{proof}

\section{Conclusion}
\label{sec:conclusion}

In this paper, we proved representation theorems between entailment relations based on {\SystemC} and based on cumulative reasoning for propositional dependence logic and propositional logic with team semantics.
Hence, they are forming, in the context of these logics, a stable class which is justified to be denoted as \emph{cumulative logics}.
The results from SMK~({\protect\citeyear{KS_SauerwaldMeierKontinen2025}}) and KLM~({\protect\citeyear{KS_KrausLehmannMagidor1990}}), and the novel results from this paper, yield a diverse landscape as shown in  Figure~\ref{fig:landscape}. 
\begin{figure}[t]
\small
    \begin{tikzpicture}
    \def\mydist{0.8cm}
    \def\mydistP{2*\mydist}
    \def\myheight{0.5cm}
        \begin{scope}
        \node[anchor=west,minimum height=\myheight,draw,rectangle,rounded corners] (TPLPrefStar) at (0,0) {\( \pmLogicX{\TPL}{\logicFont{pref}\ensuremath{[\star]}} \)};
        \node[anchor=west,minimum height=\myheight,draw,rectangle,rounded corners] (TPLPrefCapP) at ([yshift=\mydist]TPLPrefStar.west) {\( \TPLPref \cap \pmLogicX{\TPL}{\logicFont{p}} \)};
        
        \node[anchor=west,minimum height=\myheight,draw,rectangle,rounded corners] (TPLp) at ([yshift=\mydist]TPLPrefCapP.west) {\( \TPLSystemP\)\,{=}\,\( \CPLSystemP \)\,{=}\,\( \CPLPref \)};
        \node[anchor=west,minimum height=\myheight,draw,rectangle,rounded corners] (TPLPref) at ([yshift=\mydist]TPLp.west) {\( \TPLPref \)};
        \node[anchor=west,minimum height=\myheight,draw,rectangle,rounded corners] (TPLc) at ([yshift=\mydist]TPLPref.west) {\( \TPLAsym \)\,{=}\,\( \TPLSystemC \)\,{=}\,\( \CPLSystemC \)\,{=}\,\( \TPLCum \)\,{=}\,\( \CPLCum \)};

        \draw (TPLPrefStar) edge [] (TPLPrefCapP);
        \draw (TPLPrefCapP) edge [dotted,thick] (TPLp);
        \draw (TPLp) edge [dotted,thick] (TPLPref);
        \draw (TPLp) edge [bend right=12] (TPLc);
        \draw (TPLPref) edge [dotted,thick] (TPLc);
        \end{scope}
        \begin{scope}
            \node[anchor=east,minimum height=\myheight,draw,rectangle,rounded corners] (PDLPrefStar) at (\columnwidth,0) {\( \pmLogicX{\PDL}{\logicFont{pref}\ensuremath{[\star]}} \)\,{=}\,\( \pmLogicX{\PDL}{\logicFont{pref}\ensuremath{[\triangle]}} \)\,{=}\,\( \PDLPref \cap \PDLSystemP \)};
        \node[anchor=east,minimum height=\myheight,draw,rectangle,rounded corners,minimum width=1.2cm] (PDLp) at ([xshift=-6em,yshift=\mydistP]PDLPrefStar.east) {\( \PDLSystemP \)};
        
        \node[anchor=east,minimum height=\myheight,draw,rectangle,rounded corners,minimum width=1.2cm] (PDLPref) at ([yshift=\mydistP]PDLPrefStar.east) {\( \PDLPref \)};
        \node[anchor=east,minimum height=\myheight,draw,rectangle,rounded corners] (PDLc) at ([yshift=\mydistP]PDLPref.east) {\( \PDLSystemC \)\,{=}\,\( \PDLCum \)};

        \draw (PDLPrefStar) edge [dotted,thick] (PDLp);
        \draw (PDLPrefStar) edge [] (PDLPref);
        \draw (PDLp) edge [] (PDLc);
        \draw (PDLPref) edge [] (PDLc);
        \end{scope}
    \end{tikzpicture}
    \vspace{-0.5cm}
    \caption{Landscape of classes of entailment relations. 
    For not defined classes, consult  SMK~({\protect\citeyear{KS_SauerwaldMeierKontinen2025}}) or the supplemental material.
    \vspace{-0.5cm}}
    \label{fig:landscape}
\end{figure}
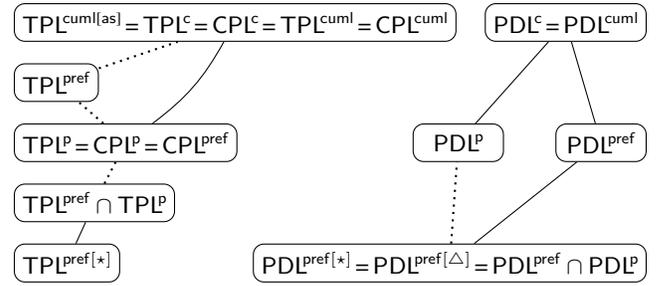

On our future agenda, we will extend our work to other team-based logics. Furthermore, we will study subclasses of the rich landscape of cumulative reasoning, such as preferential reasoning (KLM \citeyear{KS_KrausLehmannMagidor1990}), c-inference~\cite{KS_Kern-Isberner2001}, lexicographic inference~\cite{KS_Lehmann1995a} or System~W~\cite{KS_KomoBeierle2022}.

\clearpage
\bibliographystyle{kr}
\bibliography{merged.bib}

\clearpage
\appendix
\twocolumn[\begin{center}
    {\LARGE Supplemental Material}
\end{center}\vspace{0.5em}]
\setcounter{theorem}{0}
\counterwithin{theorem}{section}

\noindent This part contains supplemental information on the relationship among classes of entailment relations and proofs.

\section{\!\!\!Landscape: Classes of Entailment Relations}

We summarise hierarchies of classes of entailment relations provided by SMK~({\protect\citeyear{KS_SauerwaldMeierKontinen2025}}), KLM~({\protect\citeyear{KS_KrausLehmannMagidor1990}}) and this paper. Boxes contain known equivalent classes, lines stand for inclusions, whereby solid lines are strict inclusions.
We start with the hierarchy of classes of entailment relations over the language~\( \PL \):\\
\begin{center}
    \resizebox{0.99\columnwidth}{!}{\begin{tikzpicture}
            \node[anchor=west,minimum height=0.6cm,draw,rectangle,rounded corners] (TPLPrefStar) at (0,0) {\( \pmLogicX{\TPL}{\logicFont{pref}\ensuremath{[\star]}} \)};
            \node[anchor=west,minimum height=0.6cm,draw,rectangle,rounded corners] (TPLPrefCapP) at ([yshift=4em]TPLPrefStar.west) {\( \TPLPref \cap \pmLogicX{\TPL}{\logicFont{p}} \)};
            
            \node[anchor=west,minimum height=0.6cm,draw,rectangle,rounded corners] (TPLp) at ([yshift=4em]TPLPrefCapP.west) {\( \TPLSystemP\)\,{=}\,\( \CPLSystemP \)\,{=}\,\( \CPLPref \)};
            \node[anchor=west,minimum height=0.6cm,draw,rectangle,rounded corners] (TPLPref) at ([yshift=4em]TPLp.west) {\( \TPLPref \)};
            \node[anchor=west,minimum height=0.6cm,draw,rectangle,rounded corners] (TPLc) at ([yshift=4em]TPLPref.west) {\( \TPLAsym \)\,{=}\,\( \TPLSystemC \)\,{=}\,\( \CPLSystemC \)\,{=}\,\( \TPLCum \)\,{=}\,\( \TPLStrongCum \)\,{=}\,\( \CPLCum \)\,{=}\,\( \CPLStrongCum \)};
            
            \draw (TPLPrefStar) edge [] (TPLPrefCapP);
            \draw (TPLPrefCapP) edge [dotted,thick] (TPLp);
            \draw (TPLp) edge [dotted,thick] (TPLPref);
            \draw (TPLp) edge [bend right=12] (TPLc);
            \draw (TPLPref) edge [dotted,thick] (TPLc);
    \end{tikzpicture}}
\end{center}
Next, we consider the hierarchy of classes of entailment relations over the language \( \pdl\):\\
\begin{center}
    \begin{tikzpicture}                
        \node[anchor=west,minimum height=0.6cm,draw,rectangle,rounded corners] (PDLPrefStar) at (0,0) {\( \pmLogicX{\PDL}{\logicFont{pref}\ensuremath{[\star]}} \)\,{=}\,\( \pmLogicX{\PDL}{\logicFont{pref}\ensuremath{[\triangle]}} \)\,{=}\,\( \PDLPref \cap \PDLSystemP \)};
        \node[anchor=west,minimum height=0.6cm,draw,rectangle,rounded corners,minimum width=1.2cm] (PDLp) at ([xshift=6em,yshift=4em]PDLPrefStar.west) {\( \PDLSystemP \)};
        
        \node[anchor=west,minimum height=0.6cm,draw,rectangle,rounded corners,minimum width=1.2cm] (PDLPref) at ([yshift=4em]PDLPrefStar.west) {\( \PDLPref \)};
        \node[anchor=west,minimum height=0.6cm,draw,rectangle,rounded corners] (PDLc) at ([yshift=4em]PDLPref.west) {\( \PDLSystemC \)\,{=}\,\( \PDLCum \)\,{=}\,\( \PDLStrongCum \)};

        \draw (PDLPrefStar) edge [dotted,thick] (PDLp);
        \draw (PDLPrefStar) edge [] (PDLPref);
        \draw (PDLp) edge [] (PDLc);
        \draw (PDLPref) edge [] (PDLc);
    \end{tikzpicture}
\end{center}
Classes not defined in the main part of this paper are defined in SMK~({\protect\citeyear{KS_SauerwaldMeierKontinen2025}}) and KLM~({\protect\citeyear{KS_KrausLehmannMagidor1990}}).
We briefly describe them below; for the exact definitions, please consult the respective literature.
\begin{itemize}
    \item \( \CPLPref \) contains all entailment relations for \( \CPL \) that are defined by a preferential model~\cite{KS_KrausLehmannMagidor1990}, i.e., cumulative models in which for every state \( S\), the set $\pmL(s)$ contains exactly one element, and $\pmR$ is a strict partial order.
    \item \( \PDLPref \), respectively \( \TPLPref \), from SMK (\citeyear{KS_SauerwaldMeierKontinen2025}) contain all entailment relations for \( \PDL \), respectively \( \TPL\), that are defined for preferential models (see at \( \CPLPref \)). 
    
    \item \( \PDLSystemP \) and \( \TPLSystemP \) from SMK (\citeyear{KS_SauerwaldMeierKontinen2025}), and \( \CPLSystemP \) from KLM~({\protect\citeyear{KS_KrausLehmannMagidor1990}}), are analogously defined to \( \PDLSystemC \), \( \TPLSystemC \) and \( \CPLSystemC \) via using \textit{System~P} instead of {\SystemC}.
    One obtains \textit{System~P} by extending {\SystemC} by the following rule:
    \begin{align}	&\frac{\formulaOne\nmableit\formulaThree\hspace{0.5cm}\formulaTwo\nmableit\formulaThree}{\formulaOne\lor\formulaTwo\nmableit\formulaThree} \tag{Or}\label{pstl:Or}
    \end{align}
    \item \( \pmLogicX{\PDL}{\logicFont{pref}\ensuremath{[\star]}} \), respectively \( \pmLogicX{\TPL}{\logicFont{pref}\ensuremath{[\star]}} \), from SMK (\citeyear{KS_SauerwaldMeierKontinen2025}) is the restrictions to preferential models which satisfy the so-called \eqref{eq:StarProperty}-property,
    \begin{equation}
        \minOf{\modelsOf{\formulaOne \lor \formulaTwo}}{\pmP} \subseteq \minOf{\modelsOf{\formulaOne}}{\pmP} \cup \minOf{\modelsOf{\formulaTwo}}{\pmP}	\tag{\( \star \)}\label{eq:StarProperty}
    \end{equation}
    forcing that minimal models of a disjunction split to minimal models of the disjuncts.
    \item \( \pmLogicX{\PDL}{\logicFont{pref}\ensuremath{[\triangle]}} \) from SMK (\citeyear{KS_SauerwaldMeierKontinen2025}) is \( \PDLPref \) over preferential models restricted such that minimal states contain only singleton teams.
\end{itemize}

 \section{Complete Proofs}

\newenvironment{repeatprop}[1]{\smallskip\par\noindent\textbf{Proposition~\ref{#1}}.\itshape}{\par}
\newenvironment{repeatthm}[1]{\smallskip\par\noindent\textbf{Theorem~\ref{#1}}.\itshape}{\par}
\newenvironment{repeatcorollary}[1]{\smallskip\par\noindent\textbf{Corollary~\ref{#1}}.\itshape}{\par}
\newenvironment{repeatobs}[1]{\smallskip\par\noindent\textbf{Observation~\ref{#1}}.\itshape}{\par}
\newenvironment{repeatlemma}[1]{\smallskip\par\noindent\textbf{Lemma~\ref{#1}}.\itshape}{\par}

In this supplement, we provide more detailed and missing proofs from the main paper.

\begin{repeatprop}{prop:systemCteams}
	\( \PDLCum \subseteq  \PDLSystemC \).
\end{repeatprop}
\begin{proof}
	Let \( {\nmableitC} \in \PDLCum \) be a cumulative entailment relation that is based on a cumulative model \( \pmC=\tuple{\pmS,\pmL,\pmR} \).
	We show that \( {\nmableitC} \) satisfies all rules of {\SystemC}:\\
		\noindent[\emph{\ref{pstl:Ref}.}]
		Considering the definition of \( \nmableitC \) yields that \( \formulaOne \nmableitC \formulaOne \) if for all minimal \( s\in\statesOf{\formulaOne} \) it holds that \( \ell(s)\models\formulaOne \). By the definition of \( \statesOf{\formulaOne} \), we have \( s\in\statesOf{\formulaOne} \) if \( \ell(s)\models\formulaOne \). Consequently, we have that \( \formulaOne\nmableitC \formulaOne \).

		\noindent[\emph{\ref{pstl:LLE}.}] 
		From \( \formulaOne\equiv\formulaTwo \), we obtain that \( \statesOf{\formulaOne}=\statesOf{\formulaTwo} \) holds.
		By using this last observation and the definition of \( \nmableitC \), we obtain \( \formulaTwo\nmableitC\formulaThree \) from \( \formulaOne\nmableitC\formulaThree \).

		\noindent[\emph{\ref{pstl:RW}.}] 
		Clearly, by definition of \( \formulaOne\models\formulaTwo \) we have that \( \modelsOf{\formulaOne}^\pmC\subseteq\modelsOf{\formulaTwo}^\pmC \).
		From the definition of \( \formulaThree\nmableitC\formulaOne \), we obtain that \( \ell(s)\models\formulaOne \) holds for each minimal \( s\in\statesOf{\formulaThree} \).
		The condition \( \ell(s)\models\formulaOne \) in the last statement is equivalent to stating \( \ell(s)\in\modelsOf{\formulaOne}^\pmC \). Because of \( \modelsOf{\formulaOne}^\pmC\subseteq\modelsOf{\formulaTwo}^\pmC \), we also have that \( \ell(s)\in\modelsOf{\formulaTwo}^\pmC \); and hence, \( \ell(s)\models\formulaTwo \)  for each minimal  \( s\in\statesOf{\formulaThree} \).
		This shows that \( \formulaThree\nmableitC\formulaTwo \) holds.

		\noindent[\emph{\ref{pstl:Cut}.}] 		
		By unfolding the definition of \( \nmableitC \), we obtain 
		\(  {\minOf{\statesOf{\formulaOne\land\formulaTwo}}{\pmR}}  \subseteq \statesOf{\formulaThree} \) 
		from \( \formulaOne\land\formulaTwo \nmableitC \formulaThree \).
		Analogously, \( \formulaOne \nmableitC \formulaTwo \) unfolds to 
		\( \minOf{\statesOf{\formulaOne}}{\pmR} \subseteq \statesOf{\formulaTwo} \).
		Moreover, employing basic set theory yields that \( \statesOf{\formulaOne\land\formulaTwo}=\statesOf{\formulaOne}\cap\statesOf{\formulaTwo}\subseteq\statesOf{\formulaOne} \) holds.
		From \( \statesOf{\formulaOne\land\formulaTwo}\subseteq\statesOf{\formulaOne} \) and \( \minOf{\statesOf{\formulaOne}}{\pmR} \subseteq \statesOf{\formulaTwo} \), we obtain \( \minOf{\statesOf{\formulaOne}}{\pmR} \subseteq \statesOf{\formulaOne\land\formulaTwo} \).
		Consequently, we also have that \( \minOf{\statesOf{\formulaOne}}{\pmR}=\minOf{\statesOf{\formulaOne\land\formulaTwo}}{\pmR} \) holds.
		Using the last observation and \( \minOf{\statesOf{\formulaOne\land\formulaTwo}}{\pmR}  \subseteq \statesOf{\formulaThree} \), we obtain \( \minOf{\statesOf{\formulaOne}}{\pmR}  \subseteq \statesOf{\formulaThree} \). Hence also \( \formulaOne \nmableitC \formulaThree \) holds.

		\noindent[\emph{\ref{pstl:CM}.}] 		
		By unfolding the definition of \( \nmableitC \), we obtain \(  \minOf{\statesOf{\formulaOne}}{\pmR}  \subseteq \statesOf{\formulaTwo} \)  and \(  \minOf{\statesOf{\formulaOne}}{\pmR}  \subseteq \statesOf{\formulaThree} \).
		We have to show that \( \minOf{\statesOf{\formulaOne\land\formulaTwo}}{\pmR} \subseteq \statesOf{\formulaThree} \) holds.
		If \( \statesOf{\formulaOne\land\formulaTwo}=\emptyset \) holds, we immediately obtain \( \minOf{\statesOf{\formulaOne\land\formulaTwo}}{\pmR} = \emptyset \subseteq \statesOf{\formulaThree} \) holds. We continue with the case of \( \statesOf{\formulaOne\land\formulaTwo} \neq \emptyset \).
		Because \( \pmR \) is smooth, we have \( \minOf{\statesOf{\formulaOne\land\formulaTwo}}{\pmR} \neq \emptyset \).
		Let \( s \) be element of \( \minOf{\statesOf{\formulaOne\land\formulaTwo}}{\pmR} \).
		Clearly, we have that \( s \in \statesOf{\formulaOne}  \) holds. We show by contradiction that \( s \) is minimal in \( \statesOf{\formulaOne} \). 
		Assume that \( s \) is not minimal in \( \statesOf{\formulaOne} \).
		From the smoothness condition, we obtain that there is a state \( s'\in \statesOf{\formulaOne} \) such that \( s' \pmR s \) and \( s' \) is minimal in \( \statesOf{\formulaOne} \) with respect to \( \pmR \). 
		Because \( s' \) is minimal in \( \statesOf{\formulaOne} \)  and because we have \( \minOf{\statesOf{\formulaOne}}{\pmR}  \subseteq \statesOf{\formulaTwo} \), we also have that \(  s'\in \statesOf{\formulaTwo} \) holds and hence that \( s'\in \statesOf{\formulaOne\land\formulaTwo} \) holds. The latter contradicts the minimality of \( s \) in \( \statesOf{\formulaOne\land\formulaTwo} \).
		Consequently, we have that \( s \in \minOf{\statesOf{\formulaOne}}{\pmR} \) holds. 
		Because we have that \(  \minOf{\statesOf{\formulaOne}}{\pmR}  \subseteq \statesOf{\formulaThree} \), we obtain \( \formulaOne \land \formulaTwo \nmableitC \formulaThree \).\qedhere
	\end{proof}

Next, we will consider Lemma~\ref{lem:closures},
\begin{repeatlemma}{lem:closures}
    The following statements are true:
	\begin{enumerate}[(a)]
		\item 
		For \( {\nmableit} \in \PDLSystemC \) we have that \( \Cableit(\formulaOne) = \Cn(\Cableit(\formulaOne)) \).
		\item 
		For each formula \( \formulaOne \) there exists an \( M \subseteq \mathcal{P}(\mathbb{T}_{N}) \) such that \( \Cn(\formulaOne) = \Th(M) \) and \( M = \modelsOf{\formulaOne} \).
		\item 
		For each set of formulas \( F \) with \( F=\Cn(F) \) there exists a formula \( \formulaOne_{F} \) such that \( F = \Cn(\formulaOne_{F}) \) is true.
	\end{enumerate}
\end{repeatlemma}
The following Lemma~\ref{lem:RWclosure} to Lemma~\ref{lem:formfinite} prove Lemma~\ref{lem:closures}.
We start with observing that consequences under {\SystemC} are closed under classical consequence.
	\begin{lemma}
		\label{lem:RWclosure}
		For \( {\nmableit} \in \PDLSystemC \) we have that \( \Cableit(\formulaOne) = \Cn(\Cableit(\formulaOne)) \).
	\end{lemma}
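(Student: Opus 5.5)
The plan is to prove the two inclusions separately. I read \( \Cn(F) \) as the set of formulas \( \formulaTwo \) entailed by \( F \), i.e.\ every team satisfying all members of \( F \) satisfies \( \formulaTwo \). The inclusion \( \Cableit(\formulaOne) \subseteq \Cn(\Cableit(\formulaOne)) \) is immediate: every \( \formulaThree \in \Cableit(\formulaOne) \) is entailed by \( \Cableit(\formulaOne) \), because \( \formulaThree \) itself belongs to it. The real work is the converse: if \( \Cableit(\formulaOne) \) entails \( \formulaTwo \), then \( \formulaOne \nmableit \formulaTwo \). I would split this into two steps: closure of \( \Cableit(\formulaOne) \) under finite conjunctions, and reduction of entailment by the whole set to entailment by one finite conjunction.

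\emph{Step 1 (And is derivable in {\SystemC} without implication).} Assume \( \formulaOne \nmableit \formulaTwo \) and \( \formulaOne \nmableit \formulaThree \). The KLM derivation uses material implication in \eqref{pstl:RW}, so I would replace it with a second use of \eqref{pstl:Cut}.
\begin{enumerate}[(i)]
\item By \eqref{pstl:CM}, \( \formulaOne\land\formulaTwo \nmableit \formulaThree \).
\item By \eqref{pstl:Ref}, \( (\formulaOne\land\formulaTwo)\land\formulaThree \nmableit (\formulaOne\land\formulaTwo)\land\formulaThree \). Since \( (\formulaOne\land\formulaTwo)\land\formulaThree \models \formulaTwo\land\formulaThree \), \eqref{pstl:RW} gives \( (\formulaOne\land\formulaTwo)\land\formulaThree \nmableit \formulaTwo\land\formulaThree \).
\item Applying \eqref{pstl:Cut} to (ii) and (i) yields \( \formulaOne\land\formulaTwo \nmableit \formulaTwo\land\formulaThree \).
\item Applying \eqref{pstl:Cut} again, to (iii) and \( \formulaOne\nmableit\formulaTwo \), yields \( \formulaOne \nmableit \formulaTwo\land\formulaThree \).
\end{enumerate}
Only \eqref{pstl:Ref}, \eqref{pstl:RW}, \eqref{pstl:CM} and \eqref{pstl:Cut} are used, all applied to \( \land \)-formulas and semantic entailment, which are available in \( \PDL \). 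By induction, \( \Cableit(\formulaOne) \) is closed under finite conjunctions.

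\emph{Step 2 (finiteness).} We work in \( \PDL_N \) with \( N \) finite, so \( \mathbb{T}_N \) is finite and there are only finitely many sets of teams. Hence \( \pdlN \) has only finitely many formulas up to \( \equiv \). Choose representatives \( \formulaThree_1,\dots,\formulaThree_k \) of the equivalence classes meeting \( \Cableit(\formulaOne) \), and let \( \delta = \formulaThree_1\land\dots\land\formulaThree_k \). This set is nonempty, since \( \formulaOne\nmableit\formulaOne \) by \eqref{pstl:Ref}. By Step 1, \( \formulaOne \nmableit \delta \). Moreover, \( \modelsOf{\delta} \) is exactly the set of teams satisfying every member of \( \Cableit(\formulaOne) \), because each member is equivalent to some \( \formulaThree_i \).

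So if \( \formulaTwo \in \Cn(\Cableit(\formulaOne)) \), then \( \delta \models \formulaTwo \), and \eqref{pstl:RW} gives \( \formulaOne \nmableit \formulaTwo \). This proves the converse inclusion.

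The main obstacle is Step 1, namely getting And without implication. The double-\eqref{pstl:Cut} trick above is what I would use. The one subtle point is checking that the premise of the first \eqref{pstl:Cut} has exactly the syntactic form \( (\formulaOne\land\formulaTwo)\land\formulaThree \) that the rule requires, and I would make this explicit by instantiating \eqref{pstl:Cut} with \( \formulaOne\land\formulaTwo \) in place of \( \formulaOne \).
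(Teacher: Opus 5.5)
Your proof is correct, and it is more complete than the paper's.

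The paper gets \( \Cableit(\formulaOne) \subseteq \Cn(\Cableit(\formulaOne)) \) from reflexivity of \( \models \). For the converse it argues by contradiction, simply asserting that every \( \formulaThree \in \Cn(\Cableit(\formulaOne)) \) comes with a single \( \formulaTwo \in \Cableit(\formulaOne) \) such that \( \formulaTwo \models \formulaThree \); one application of (RW) then finishes. That single witness is immediate only if \( \Cn(F) \) is read pointwise, as the consequences of individual members of \( F \).

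Under the set-entailment reading you adopt, the witness has to be constructed. This is also the reading that Lemma~\ref{lem:formfinite} (via \( \modelsOf{F}^t \)) and the identity \( \Cableit(\formulaOne) = \Th(M) \) in the proof of Theorem~\ref{thm:definablity} actually need. Your two steps construct the witness:
\begin{itemize}
\item Your implication-free derivation of And (CM; then Ref followed by RW; then Cut twice) gives closure of \( \Cableit(\formulaOne) \) under finite conjunctions. It checks out rule by rule against the paper's formulations.
\item Finiteness of \( \PDL_{N} \) modulo \( \equiv \) collapses the whole premise set into one formula \( \delta \) with \( \formulaOne \nmableit \delta \). Pick the \( \formulaThree_i \) inside \( \Cableit(\formulaOne) \), or note that RW makes \( \Cableit(\formulaOne) \) closed under \( \equiv \).
\end{itemize}

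So the paper's route is shorter but only secures the weaker, pointwise closure. Yours proves the stronger closure that the representation argument relies on, at the modest price of deriving And and using that \( N \) is finite.
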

	\begin{proof}
		We obtain \( \Cableit(\formulaOne) \subseteq \Cn(\Cableit(\formulaOne)) \) by definition and reflexivity of \( \models \).
		Now assume that \(  \Cableit(\formulaOne) \subsetneq \Cn(\Cableit(\formulaOne)) \) is true, i.e., there is \( \formulaTwo \in \Cableit(\formulaOne) \) and \( \formulaThree \in \Cn(\Cableit(\formulaOne)) \) such that \( \formulaTwo \models \formulaThree \) and \( \formulaThree \notin \Cableit(\formulaOne) \).
		Now recall that \( {\nmableit} \) satisfies \eqref{pstl:RW}, and hence, we also have \( \formulaOne \nmableit \formulaThree \), because \( \formulaTwo \models \formulaThree \) and \( \formulaOne \nmableit \formulaTwo \) is true. From \( \formulaOne \nmableit \formulaThree \), we obtain the contradiction \( \formulaThree \in \Cableit(\formulaOne) \).
	\end{proof}
	
	The theory \( \Th(M) \) of a set of teams \( M \) is given by
	\begin{equation*}
		\Th(M) = \{\, \formulaOne \in \ssFormulas \mid X \models \formulaOne  \text{ for all } X \in M \,\} \ .
	\end{equation*}
	For every set of classical consequences of a formula \( \varphi \), there is a set of teams whose theory yields the same consequences.
    Because this is a direct consequence of the semantics, we omit a proof here.
	\begin{lemma}
		\label{lem:formtheorie}
		For each formula \( \formulaOne \) there exists an \( M \subseteq \mathcal{P}(\mathbb{T}_{N}) \) such that \( \Cn(\formulaOne) = \Th(M) \) and \( M = \modelsOf{\formulaOne}^t \).
	\end{lemma}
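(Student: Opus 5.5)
The statement to prove is Lemma~\ref{lem:formtheorie}. My plan is to take \( M \coloneqq \modelsOf{\formulaOne}^t \) directly. The second conjunct \( M = \modelsOf{\formulaOne}^t \) then holds by definition, and the work reduces to checking \( \Cn(\formulaOne) = \Th(\modelsOf{\formulaOne}^t) \). Since \( M \) is a set of teams over \( N \), one should read \( M \subseteq \mathcal{P}(\mathbb{T}_{N}) \) loosely, as \( M \subseteq \mathbb{T}_{N} \).

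The proof is a double inclusion, unfolding the definitions of \( \Cn \), \( \Th \) and \( \models^t \).

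\textbf{Inclusion \( \Cn(\formulaOne) \subseteq \Th(\modelsOf{\formulaOne}^t) \).} Let \( \formulaTwo \in \Cn(\formulaOne) \), so \( \formulaOne \models \formulaTwo \). By the definition of \( \models^t \) this means \( \modelsOf{\formulaOne}^t \subseteq \modelsOf{\formulaTwo}^t \). Hence every team \( X \in M \) satisfies \( X \models \formulaTwo \), and so \( \formulaTwo \in \Th(M) \).

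\textbf{Inclusion \( \Th(\modelsOf{\formulaOne}^t) \subseteq \Cn(\formulaOne) \).} Let \( \formulaTwo \in \Th(M) \). Then every \( X \in \modelsOf{\formulaOne}^t \) satisfies \( \formulaTwo \), which says exactly that \( \modelsOf{\formulaOne}^t \subseteq \modelsOf{\formulaTwo}^t \). This is \( \formulaOne \models \formulaTwo \), so \( \formulaTwo \in \Cn(\formulaOne) \).

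Nothing here uses anything specific to \( \PDL \) beyond the fact that \( \models \) on formulas is defined as inclusion of team-model sets. The same argument therefore also works for \( \TPL \).

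The only real obstacle is bookkeeping rather than mathematics. All formulas and teams must range over the same fixed finite domain \( N \), so that \( X \models \formulaTwo \) is defined for every \( X \in \mathbb{T}_N \) and every \( \formulaTwo \in \pdlN \). I would state this convention explicitly at the start of the proof.
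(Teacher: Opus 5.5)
Your proposal is correct and matches the paper, which omits the proof as a direct consequence of the semantics. Taking \( M = \modelsOf{\formulaOne}^t \) (which the statement forces anyway) and unfolding \( \models^t \) and \( \Th \) gives \( \Th(\modelsOf{\formulaOne}^t) = \{\formulaTwo \mid \modelsOf{\formulaOne}^t \subseteq \modelsOf{\formulaTwo}^t\} = \Cn(\formulaOne) \); your reading of \( M \subseteq \mathcal{P}(\mathbb{T}_{N}) \) as \( M \subseteq \mathbb{T}_{N} \), i.e., \( M \in \mathcal{P}(\mathbb{T}_{N}) \), is the right fix for the paper's notational slip.
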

	Because we consider online finite signatures here, we obtain the following result.
	
	\begin{lemma}
		\label{lem:formfinite}
		For each set of formulas \( F \) with \( F=\Cn(F) \) there exists a formula \( \formulaOne_{F} \) such that \( F = \Cn(\formulaOne_{F}) \).
	\end{lemma}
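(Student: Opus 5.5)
The plan is to exploit the fact that we work over a fixed finite signature \( N \), so that \( \PDL_N \) has only finitely many formulas up to semantic equivalence. There are only finitely many teams over \( N \) (at most \( 2^{2^{|N|}} \)), so there are at most \( 2^{2^{2^{|N|}}} \) distinct sets \( \modelsOf{\formulaOne}^t \). Formulas with the same model set are \( \equiv^t \)-equivalent. Hence any set of formulas \( F \) splits into finitely many \( \equiv^t \)-classes, and I would choose representatives \( \formulaOne_1,\dots,\formulaOne_k \in F \), one from each class that meets \( F \).

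I would then define \( \formulaOne_F \) as the conjunction \( \formulaOne_1\land\dots\land\formulaOne_k \), taking \( \formulaOne_F = \top \) when \( k=0 \). Its semantics is
\[ \modelsOf{\formulaOne_F}^t = \bigcap_{i} \modelsOf{\formulaOne_i}^t = \bigcap_{\formulaTwo\in F}\modelsOf{\formulaTwo}^t . \]
As a sanity check, this set is downward closed and contains \( \emptyset \), by Proposition~\ref{prop:pdl_pincl_properties}. Alternatively, \( \formulaOne_F \) could be obtained directly from the expressive completeness of \( \pdl \) for such team properties \cite{Yang_dissertation}. The conjunction route avoids that appeal, however, and I would use it.

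Next I would prove \( F = \Cn(\formulaOne_F) \) by two inclusions.
\begin{enumerate}[(i)]
\item For \( F \subseteq \Cn(\formulaOne_F) \): every \( \formulaTwo \in F \) is equivalent to some \( \formulaOne_i \). Hence \( \modelsOf{\formulaOne_F}^t \subseteq \modelsOf{\formulaOne_i}^t = \modelsOf{\formulaTwo}^t \), that is, \( \formulaOne_F \models \formulaTwo \).
\item For \( \Cn(\formulaOne_F) \subseteq F \): suppose \( \formulaOne_F \models \formulaTwo \). Then \( \formulaTwo \) is a semantic consequence of the members of \( F \) taken jointly. The closure hypothesis \( F=\Cn(F) \) then places \( \formulaTwo \) in \( F \).
\end{enumerate}

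The main obstacle is step (ii), specifically how to read the closure hypothesis. The statement relies on \( \Cn(F) \) expressing closure of \( F \) under joint semantic consequence: if \( \bigcap_{\formulaThree\in F}\modelsOf{\formulaThree}^t \subseteq \modelsOf{\formulaTwo}^t \), then \( \formulaTwo\in F \). This is the reading used in Lemma~\ref{lem:RWclosure} and Theorem~\ref{thm:definablity}, where \( F=\Cableit(\formulaOne) \). I would make this reading explicit and argue under it.

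I would also verify that the reading is available in the intended application, \( F = \Cableit(\formulaOne) \) with \( \nmableit\in\PDLSystemC \). The argument would go as follows. The conjunction of two consequences of \( \formulaOne \) is again a consequence, using \eqref{pstl:CM}, \eqref{pstl:Cut} and \eqref{pstl:RW} in the standard KLM way. Iterating this gives \( \formulaOne \nmableit \formulaOne_F \). Then \eqref{pstl:RW} turns any \( \formulaOne_F\models\formulaTwo \) into \( \formulaOne\nmableit\formulaTwo \). So in this case the needed closure follows from {\SystemC} together with Lemma~\ref{lem:RWclosure}. Finally, the empty case is handled by \( \formulaOne_F=\top \), since \( \top \) lies in every nonempty consequence-closed set.
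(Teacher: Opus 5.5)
Your proof is correct under the joint-consequence reading ($\formulaTwo\in\Cn(F)$ iff $\bigcap_{\gamma\in F}\modelsOf{\gamma}^t\subseteq\modelsOf{\formulaTwo}^t$). It takes a different and more elementary route than the paper.

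The paper does not form a conjunction. It invokes Yang's expressive completeness of $\pdl$ for downward-closed team properties containing $\emptyset$, together with the definability of the Boolean disjunction $\ovee$. It writes $\modelsOf{F}^t$ as a finite union $M_1\cup\dots\cup M_n$ of downward-closed sets, takes $\formulaOne_F=\ovee_{i}\formulaOne_{M_i}$, and stops at $\modelsOf{F}^t=\modelsOf{\formulaOne_F}^t$.

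You use only the finiteness of $N$ and the fact that $\land$ is intersection. This shows expressive completeness is not needed here. The same conjunction of representatives of $\Cableit(\formulaOne)$ also directly yields the $\theta_{\formulaOne}$ with $\modelsOf{\theta_{\formulaOne}}^t=\mathrm{Norm}(\formulaOne,{\nmableit})$ in Theorem~\ref{thm:definablity}. Your argument transfers to any logic with conjunction and finitely many formulas up to equivalence, and you spell out the two inclusions that the paper leaves implicit.

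What the paper's route buys is independence from how $\Cn(F)$ is read. Expressive completeness defines whichever downward-closed set is relevant. That includes the union $\bigcup_{\formulaOne\in F}\modelsOf{\formulaOne}^t$, which matches the paper's literal definition (``$\formulaOne\models\formulaTwo$ for all $\formulaOne\in F$'') and which no conjunction can produce.

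That literal reading is not viable, however. Under it, the hypothesis $F=\Cn(F)$ forces all members of $F$ to be pairwise equivalent, so the lemma becomes trivial. It also makes Lemma~\ref{lem:closures}(a) fail for every non-valid $\formulaOne$, since $\top,\formulaOne\in\Cableit(\formulaOne)$. The joint reading you adopt is the one Theorem~\ref{thm:definablity} actually needs.

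Your derivation that $\Cableit(\formulaOne)$ is closed under joint consequence uses the derived And rule, which also needs \eqref{pstl:Ref} and \eqref{pstl:LLE} besides the rules you list. This supplies a step that the paper's proof of Lemma~\ref{lem:RWclosure} lacks, since that proof applies \eqref{pstl:RW} to a single premise only.

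One small tidy-up: under your reading $\top\in\Cn(\emptyset)$, so $F=\emptyset$ never satisfies the hypothesis and always $k\geq 1$. The clause $\formulaOne_F=\top$ for $k=0$ is therefore unnecessary, and it would in any case give $\Cn(\top)\neq\emptyset$.
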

	\begin{proof}
		We will employ the underlying logic \( \PDL_{N} \), where we have a finite signature \( N \subseteq \Prop \).
		It has been shown that classical disjunction is expressible  in  \( \PDL_{N} \), i.e., for two formulas \( \formulaOne,\formulaTwo \) there is a formula \( \formulaOne \ovee \formulaTwo \) with \( \modelsOf{\formulaOne \ovee \formulaTwo}^t = \modelsOf{\formulaOne}^t\cup \modelsOf{\formulaTwo}^t \) 
		(\citeauthor{Yang_dissertation} \citeyear{Yang_dissertation}).
		Moreover, every downward-closed set of teams is \( M \) definable, i.e., there is a \( \formulaOne_{M} \) with \( \modelsOf{\formulaOne_{M}}^t = M \).
		Because \( N \)  is finite, for every set of formulas \( F \) the set \( \modelsOf{F}^t \) is a finite disjunction of downward-closed sets of teams, i.e., \( \modelsOf{F}^t = M_1\cup \ldots \cup M_n \).
		Hence, for \(  \formulaOne_{F} = \ovee_{i=1}^{n}  \formulaOne_{M_i} \) we have \( \modelsOf{F}^t = \modelsOf{\formulaOne_{F}}^t \).
	\end{proof}

\end{document}